\documentclass[a4paper,11pt]{article}
\pdfoutput=1

\usepackage{amsthm}
\usepackage{amsmath}
\usepackage{amssymb}
\usepackage{amsfonts}
\usepackage{amstext}
\usepackage{xspace}
\usepackage{graphicx}

\usepackage{fullpage}

\RequirePackage{tikz}

\RequirePackage[breaklinks,pdfborder={0 0 0}]{hyperref}

\usepackage[ backend=biber, style=ieee-alphabetic, firstinits=false ]{biblatex}
\bibliography{seth}

\usepackage{sectsty}
\allsectionsfont{\bfseries\sffamily}

\usepackage{enumitem}

\newtheorem{theorem}{Theorem}[section]
\newtheorem{lemma}[theorem]{Lemma}
\newtheorem{corollary}[theorem]{Corollary}
\newtheorem{definition}[theorem]{Definition}
\newtheorem{observation}[theorem]{Observation}

\newcommand{\pp}[1]{{\normalfont\textsc{#1}}}

\newcommand{\cnfsat}{\pp{CNF-Sat}\xspace}
\newcommand{\cnfsatargs}[1]{\pp{#1-CNF-Sat}\xspace}

\newcommand{\sparsecnfsat}{\pp{Sparse-CNF-Sat}\xspace}
\newcommand{\sparsecnfsatargs}[2]{\pp{#2-Sparse-#1-CNF-Sat}\xspace}

\newcommand{\pcnfsat}{\pp{CNF-$\oplus$Sat}\xspace}
\newcommand{\pcnfsatargs}[1]{\pp{#1-CNF-$\oplus$Sat}\xspace}

\newcommand{\psparsecnfsatargs}[2]{\pp{#2-Sparse-#1-CNF-$\oplus$Sat}\xspace}

\newcommand{\naesat}{\pp{NAE-Sat}\xspace}
\newcommand{\naesatargs}[1]{\pp{#1-NAE-Sat}\xspace}

\newcommand{\pnaesat}{\pp{$\oplus$NAE-Sat}\xspace}
\newcommand{\pnaesatargs}[1]{\pp{#1-$\oplus$NAE-Sat}\xspace}

\newcommand{\hittingset}{\pp{Hitting Set}\xspace}
\newcommand{\hittingsetargs}[1]{\pp{#1-Hitting Set}\xspace}

\newcommand{\sparsehittingset}{\pp{Sparse-Hitting Set}\xspace}
\newcommand{\sparsehittingsetargs}[2]{\pp{#2-Sparse-#1-Hitting 
    Set}\xspace}

\newcommand{\phittingset}{\pp{$\oplus$Hitting Sets}\xspace}
\newcommand{\phittingsetargs}[1]{\pp{#1-$\oplus$Hitting Sets}\xspace}
\newcommand{\psparsehittingsetargs}[2]{\pp{#2-Sparse-#1-$\oplus$Hitting 
    Sets}\xspace}

\newcommand{\pallhittingset}{{\text{$\oplus$\sc All Hitting 
      Sets}}\xspace}
\newcommand{\pallsparsehittingsetargs}[2]{\pp{#2-Sparse-#1-$\oplus$All 
    Hitting Sets}\xspace}

\newcommand{\cvc}{\pp{Connected Vertex Cover}\xspace}
\newcommand{\pcvc}{\pp{$\oplus$Connected Vertex Covers}\xspace}

\newcommand{\setcover}{\pp{Set Cover}\xspace}
\newcommand{\setcoverarg}[1]{\pp{#1-Set Cover}\xspace}

\newcommand{\psetcover}{\pp{$\oplus$Set Covers}\xspace}
\newcommand{\psetcoverarg}[1]{\pp{#1-$\oplus$Set Covers}\xspace}
\newcommand{\psparsesetcoverargs}[2]{\pp{#2-Sparse-#1-$\oplus$Set 
    Covers}\xspace}

\newcommand{\pallsetcover}{\pp{$\oplus$All Set Covers}\xspace}
\newcommand{\pallsetcoverarg}[1]{\pp{#1-$\oplus$All Set 
    Covers}\xspace}

\newcommand{\pallsparsesetcoverargs}[2]{\pp{#2-Sparse-#1-$\oplus$All 
    Set Covers}\xspace}

\newcommand{\setpart}{\pp{Set Partitioning}\xspace}
\newcommand{\setpartarg}[1]{\pp{#1-Set Partitioning}\xspace}

\newcommand{\setsplitting}{\pp{Set Splitting}\xspace}
\newcommand{\setsplittingarg}[1]{\pp{#1-Set Splitting}\xspace}

\newcommand{\psetsplitting}{\pp{$\oplus$Set Splitting}\xspace}
\newcommand{\psetsplittingarg}[1]{\pp{#1-$\oplus$Set 
    Splitting}\xspace}

\newcommand{\steinertree}{\pp{Steiner Tree}\xspace}
\newcommand{\psteinertree}{\pp{$\oplus$Steiner Tree}\xspace}

\newcommand{\subsetsum}{\pp{Subset Sum}\xspace}

\newcommand{\vspcircuitsat}{\pp{VSP-Circuit-SAT}\xspace}
\newcommand{\vspcircuitsatargs}[1]{\pp{#1-VSP-Circuit-SAT}\xspace}

\newcommand{\hamcycle}{\pp{Hamiltonian Cycle}\xspace}

\newcommand{\coloring}{\pp{Graph Coloring}\xspace}

\newcommand{\seth}{\textrm{SETH}\xspace}
\newcommand{\pseth}{\textrm{\ensuremath{\oplus}-SETH}\xspace}
\newcommand{\vsp}{{\normalfont\textsc{VSP}}\xspace}
\newcommand{\inp}{{\text{\rm input}}\xspace}
\newcommand{\out}{{\text{\rm output}}\xspace}

\newcommand{\nats}{\mathbb{N}}

\newcommand{\OH}{O}

\newcommand{\poly}{{\rm{poly}}}
\newcommand{\ones}{{\rm{ones}}}
\newcommand{\cF}{{\mathcal{F}}}
\newcommand{\cS}{{\mathcal{S}}}

\RequirePackage{footmisc}

\title{%
 \bfseries\sffamily%
  On Problems as Hard as CNF-SAT%
  \thanks{An extended abstract of this paper appears in the
  proceedings of CCC 2012.}%
}

\author{%
  {\large\sc Marek Cygan%
    \thanks{
      IDSIA, University of Lugano, Switzerland.
      \texttt{marek@idsia.ch}.
      Partially supported by National Science Centre grant no. N206
      567140, Foundation for Polish Science and ONR Young Investigator
      award when at the University at Maryland.
    }}
  \and {\large\sc Holger Dell%
    \thanks{%
      LIAFA, Universit\'e Paris Diderot, France.
      \texttt{holger@liafa.univ-paris-diderot.fr}.
      Research partially supported by the Alexander von Humboldt
      Foundation and NSF grant 1017597.
    }}
  \and {\large\sc Daniel Lokshtanov%
    \thanks{%
      University of Bergen, Norway.
      \texttt{daniello@ii.uib.no}.
    }}
  \and {\large\sc D\'aniel Marx%
    \thanks{%
      Computer and Automation Research Institute, Hungarian
      Academy of Sciences (MTA SZTAKI), Budapest, Hungary.
      \texttt{dmarx@cs.bme.hu}.
      Research supported by ERC Starting Grant PARAMTIGHT (280152).
    }}
  \and {\large\sc Jesper Nederlof%
    \thanks{%
      Utrecht University, The Netherlands.
      \texttt{j.nederlof@uu.nl}.
      Supported by NWO project "Space and Time Efficient Structural
      Improvements of Dynamic Programming Algorithms".
    }}
  \and {\large\sc Yoshio Okamoto%
    \thanks{%
      Japan Advanced Institute of Science and Technology, Japan.
      \texttt{okamotoy@uec.ac.jp}.
      Partially supported by Grant-in-Aid for Scientific Research from
      Japan Society for the Promotion of Science.
    }}
  \and {\large\sc Ramamohan Paturi%
    \thanks{%
      University of California, USA.
      \texttt{paturi@cs.ucsd.edu}.
      This research is supported by NSF grant CCF-0947262 from the
      Division  of Computing and Communication Foundations.  Any
      opinions, findings and conclusions or recommendations expressed
      in this material are those of the authors and do not necessarily
      reflect the views of the National Science Foundation.
    }}
  \and {\large\sc Saket Saurabh%
    \thanks{%
      Institute of Mathematical Sciences, India.
      \texttt{saket@imsc.res.in}.
    }}
  \and {\large\sc Magnus Wahlstr\"om%
    \thanks{%
      Royal Holloway, University of London.
      \texttt{Magnus.Wahlstrom@rhul.ac.uk}.
    }}
}

\begin{document}
\maketitle

\setcounter{page}{0}
\thispagestyle{empty}
\begin{abstract}
  \centering
  \begin{minipage}{.95\textwidth}
    \setlength{\parindent}{1.5em}
The field of exact exponential time algorithms for NP-hard problems
has thrived over the last decade.
While exhaustive search remains asymptotically the fastest known
algorithm for some basic problems, difficult and non-trivial
exponential time algorithms have been found for a myriad of problems,
including \textsc{Graph Coloring}, \textsc{Hamiltonian Path},
\textsc{Dominating Set} and $3$-$\cnfsat{}$.
In some instances, improving these algorithms further seems to be out
of reach.
The $\cnfsat{}$ problem is the canonical example of a problem for
which the trivial exhaustive search algorithm runs in time $O(2^n)$,
where $n$ is the number of variables in the input formula.
While there exist non-trivial algorithms for $\cnfsat{}$ that run in
time $o(2^n)$, no algorithm was able to improve the \emph{growth
rate}~$2$ to a smaller constant, and hence it is natural to conjecture
that $2$ is the optimal growth rate.
The \emph{strong exponential time hypothesis}~(SETH) by Impagliazzo and
Paturi [JCSS 2001] goes a little bit further and asserts that, for
every $\epsilon<1$, there is a (large) integer $k$ such that
$k$-$\cnfsat{}$ cannot be computed in time $2^{\epsilon n}$.

In this paper, we show that, for every $\epsilon < 1$, the problems
$\hittingset$, $\setsplitting$, and $\naesat$ cannot be computed in
time $O(2^{\epsilon n})$ unless SETH fails.
Here $n$ is the number of elements or variables in the input.
For these problems, we actually get an equivalence to SETH in a
certain sense.
We conjecture that SETH implies a similar statement for $\setcover$,
and prove that, under this assumption, the fastest known algorithms
for $\steinertree$, $\cvc$, $\setpart$, and the pseudo-polynomial time
algorithm for $\subsetsum$ cannot be significantly improved.
Finally, we justify our assumption about the hardness of $\setcover$
by showing that the parity of the number of solutions to $\setcover$
cannot be computed in time $O(2^{\epsilon n})$ for any $\epsilon<1$
unless SETH fails.

  \end{minipage}
\end{abstract}

\newpage

\section{Introduction}

Every problem in NP can be solved in time $2^{\poly(m)}$ by brute
force, that is, by enumerating all candidates for an NP-witness, which
is guaranteed to have length polynomial in the input size~$m$.
While we do not believe that polynomial time algorithms for
NP-complete problems exist, many NP-complete problems have exponential
time algorithms that are dramatically faster than the na\"{i}ve brute
force algorithm.
For some classical problems, such as \subsetsum{} or
\hamcycle{}, such algorithms were
known~\cite{HeldKarp62jsiam,Bellman62} even before the
concept of NP-completeness was discovered.
Over the last decade, a subfield of algorithms devoted to developing
faster exponential time algorithms for NP-hard problems has emerged.
A myriad of problems have been shown to be solvable much faster than
by na\"{i}ve brute force, and a variety of algorithm design techniques
for exponential time algorithms has been developed.

What the field of exponential time algorithms sorely lacks is a
complexity-theoretic framework for showing running time lower bounds.
Some problems, such as {\sc Independent Set} and {\sc Dominating Set}
have seen a chain of
improvements~\cite{FominGK09,RooijND09,Robson86,KneisLR09}, each new
improvement being smaller than the previous.
For these problems, the running time of the discovered algorithms
seems to converge towards $O(C^n)$ for some unknown constant~$C$,
where~$n$ denotes the number of vertices of the input graphs.
For other problems, such as \coloring{} or \steinertree{}, non-trivial
algorithms have been found, but improving the \emph{growth rate~$C$}
of the running time any further seems to be out of
reach~\cite{BjorklundHK09,Nederlof09}.
The purpose of this paper is to develop tools that allow us to explain
why we are stuck for these problems.
Ideally, for any problem whose best known algorithm runs in
time~$O(C^n)$, we want to prove that the existence of $O(c^n)$-time
algorithms for any constant~$c<C$ would have implausible
complexity-theoretic consequences.

\paragraph{Previous Work.}
Impagliazzo and Paturi's {\em Exponential Time Hypothesis} (ETH)
addresses the question whether NP-hard problems can have algorithms
that run in ``subexponential time'' \cite{IP01}.
More precisely, the hypothesis asserts that $3$-\cnfsat cannot be
computed in time~$2^{o(n)}$, where $n$ is the number of variables
in the input formula.
ETH is considered to be a plausible complexity-theoretic assumption,
and subexponential time algorithms have been ruled out under ETH for
many decision problems~\cite{IPZ01},
parameterized problems~\cite{ChenCFHJKX05,LokshtanovMS11},
approximation problems~\cite{Marx07a}, and
counting problems~\cite{DHMTW12}.
However, ETH does not seem to be sufficient for pinning down what
exactly the best possible growth rate is.
For this reason, we base our results on a stronger hypothesis.

The fastest known algorithms for \cnfsat{} have running times of the
form $2^{n-o(n)} \poly(m)$
\cite{Schuler:2005:ASP:1056301.1056304,DBLP:conf/coco/Williams11},
which does not improve upon the growth rate~$2$ of the na\"ive brute
force algorithm that runs in time $2^n \poly(m)$.
Hence a natural candidate for a stronger hypothesis is that \cnfsat
cannot be computed in time $2^{\epsilon n}\poly(m)$ for any
$\epsilon<1$.
However, we do not know whether our lower bounds on the growth rate of
specific problems can be based on this hypothesis.
The main technical obstacle is that we have no analogue of the
sparsification lemma, which applies to \mbox{$k$-CNF} formulas and
makes ETH a robust hypothesis~\cite{IPZ01}.
In fact, very recent results indicate that such a sparsification may
be impossible for general CNF formulas~\cite{SS11}.
For this reason, we consider the {\em Strong Exponential Time
Hypothesis} (\seth) of Impagliazzo and
Paturi~\cite{IP01,IPZ01,CalabroIP09}.
This hypothesis asserts that, for every $\epsilon<1$, there is a
(large) integer~$k$ such that $k$-\cnfsat cannot be computed by any
bounded-error randomized algorithm in time~$O(2^{\epsilon n})$.
In particular, SETH implies the hypothesis for \cnfsat above, but we
do not know whether they are equivalent.
Since SETH is a statement about $k$-CNF formulas for
constant~$k=k(\epsilon)$, we can apply the sparsification lemma for
every fixed~$k$, which allows us to use SETH as a starting point in
our reductions.

\paragraph{Our results.}
Our first theorem is that SETH is equivalent to lower bounds on the
time complexity of a number of standard NP-complete problems.
\begin{theorem}\label{thm:branch}
  Each of the following statements is equivalent to SETH:
\begin{enumerate}[%
      label=(\roman*) $\,\forall \epsilon < 1 .\exists k.\,$,%
      leftmargin=8em]
    \item
      $k$-\cnfsat,
      the satisfiability problem for $n$-variable $k$-CNF formulas,
      cannot be computed in time~$O(2^{\epsilon n})$.
    \item
      $k$-\hittingset,
      the hitting set problem for set systems over~$[n]$ with sets of
      size at most~$k$,
      cannot be computed in time~$O(2^{\epsilon n})$.
    \item
      $k$-\setsplitting,
      the set splitting problem for set systems over~$[n]$ with sets
      of size at most~$k$,
      cannot be computed in time~$O(2^{\epsilon n})$.
    \item
      $k$-\naesat,
      the not-all-equal assignment problem for $n$-variable $k$-CNF
      formulas,
      cannot be computed in time~$O(2^{\epsilon n})$.
    \item[(v) $\,\forall \epsilon < 1 .\exists c.\,$]
      \vspcircuitsat{$c$},
      the satisfiability problem for $n$-variable series-parallel
      circuits of size at most~$cn$,
      cannot be computed in time~$O(2^{\epsilon n})$.
  \end{enumerate}
\end{theorem}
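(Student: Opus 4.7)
The plan is to establish all five statements as pairwise equivalent via a cycle of reductions, each preserving the growth rate $2^{(1-o(1))n}$. Statement~(i) is essentially a restatement of \seth{} modulo the sparsification lemma of Impagliazzo--Paturi--Zane, which I would invoke throughout to assume every $k$-CNF instance has only $O(n)$ clauses; this is what makes auxiliary-gadget costs manageable in the reductions below.

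For the NAE-Sat equivalence (i)$\Leftrightarrow$(iv), the forward direction appends one shared fresh variable~$w$ to every CNF clause, yielding an NAE-Sat instance on $n+1$ variables whose satisfying assignments with $w$ false correspond exactly to satisfying assignments of the original formula (and by NAE flip-symmetry, the $w$-true assignments are redundant). The reverse direction splits each NAE clause $(\ell_1,\dots,\ell_k)$ into two CNF clauses $(\ell_1\lor\dots\lor\ell_k)$ and $(\lnot\ell_1\lor\dots\lor\lnot\ell_k)$ over the same variables, preserving the variable count exactly and the clause width up to a factor of one.

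For the monotone variants (ii) \hittingset{} and (iii) \setsplitting{}, the main obstacle is that neither formalism admits literal polarity, whereas \cnfsat{} and \naesat{} do. The naive reduction --- duplicating each variable $x_i$ into two elements with a complementarity gadget forcing exactly one of them into the hitting or splitting subset --- doubles the element count from $n$ to $2n$ and therefore converts a $2^{\epsilon\cdot 2n}$ algorithm into only a $2^{2\epsilon n}$ algorithm for the source problem, ruining the ``$\forall\epsilon<1$'' quantifier. The central technical challenge is to design reductions that add only $o(n)$ new elements. For \setsplitting{}, I would try to exploit the global flip symmetry of NAE-Sat to anchor one variable's polarity and then encode all negations via a small shared pool of auxiliary elements reused across the $O(n)$ sparsified clauses. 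For \hittingset{}, a similar parsimonious gadget, likely tailored to the sparsified structure, should convert negative literals into set-membership constraints while staying within $(1+o(1))n$ total elements.

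For (v) \vspcircuitsat{c}, since $k$-CNF formulas are themselves VSP circuits of size $O(kn)$, the implication (i)$\Rightarrow$(v) is immediate. For the converse, I would apply a Tseitin-style encoding along the recursive series-parallel decomposition of a size-$cn$ circuit to produce an equivalent CNF formula, then sparsify; here the circuit-size parameter $c$ plays the role of the clause-width $k$ in the other reductions, so that any $2^{\epsilon n}$ algorithm for the CNF side yields a $2^{\epsilon n}$ algorithm for \vspcircuitsat{c} on roughly the same variable count. I expect the Hitting Set and Set Splitting reductions to be the hardest part, since variable-efficient encoding of negations in a monotone framework is the genuine technical novelty of the equivalence theorem; every other step reduces to well-known textbook gadgets combined with sparsification bookkeeping.
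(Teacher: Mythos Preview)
Your plan correctly identifies the crux --- variable-efficient monotone encodings for (ii) and (iii) --- but the actual mechanism you sketch (``a small shared pool of auxiliary elements reused across the sparsified clauses'') is not the paper's idea and I do not see how to make it work. The paper does \emph{not} encode negations at all. Instead it partitions the $n$ variables into blocks of size $p$ and maps each of the $2^p$ partial assignments on a block injectively to a subset of size $\lceil p'/2\rceil$ of a block of $p'=p+2\lceil\log_2 p\rceil$ elements, using the fact that the central binomial coefficient $\binom{p'}{\lceil p'/2\rceil}\ge 2^p$. Clauses then become unions of the ``forbidden'' half-complements across the at most $k$ blocks they touch. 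This yields a \hittingset{} instance on $n(1+O(\tfrac{\log p}{p}))$ elements, and the $o(n)$ overhead comes from $p\to\infty$, not from any shared-pool trick. The subsequent \hittingset{}$\to$\setsplitting{} step is separate and also non-obvious: it guesses the intersection size of the hitting set with each block (at most $(p+1)^{n/p}$ guesses) and uses two anchor elements $R,B$. Your proposal does not contain either of these ideas.

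Your treatment of (v) has a concrete gap. Tseitin encoding of a size-$cn$ VSP circuit introduces $\Theta(cn)$ auxiliary gate variables, so a $2^{\epsilon n'}$ algorithm on the resulting CNF runs in time $2^{\Theta(\epsilon c n)}$ on the original input; this does not give a $2^{\delta n}$ algorithm with $\delta<1$ independent of $c$, which is exactly what $\lnot$(v) requires. The paper instead invokes Valiant's depth-reduction theorem for VSP graphs: a size-$cn$ VSP circuit is equivalent to an OR of $2^{n/d}$ many $k$-CNFs on the \emph{same} $n$ variables (with $k$ depending on $c,d$), so a $2^{\epsilon n}$ $k$-\cnfsat{} algorithm yields time $2^{(\epsilon+1/d)n}$ overall. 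This depth-reduction step is essential and cannot be replaced by Tseitin. Your (i)$\Leftrightarrow$(iv) and (i)$\Rightarrow$(v) directions are fine and match the paper.
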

For all of the above problems, the na\"ive brute force algorithm runs
in time~$O(2^n)$.
While there may not be a consensus that SETH is a ``plausible''
complexity-theoretic assumption, our theorem does indicate that
finding an algorithm for \cnfsat whose growth rate is smaller than~$2$
is as difficult as finding such an algorithm for any of the above
problems.
Since our results are established via suitable reductions, this can be
seen as a completeness result under these reductions.
Moreover, we actually prove that the optimal growth rates for all of
the problems above are \emph{equal} as~$k$ tends to infinity.
This gives an additional motivation to study the Strong Exponential
Time Hypothesis.

An immediate consequence of Theorem~\ref{thm:branch} is that, if SETH
holds, then \cnfsat, \hittingset, \setsplitting, \naesat, and
the satisfiability problem of series-parallel circuits
do not have bounded-error randomized algorithms that run in time
$2^{\epsilon n}\poly(m)$ for any $\epsilon<1$.
All of these problems are {\em search} problems, where the objective
is to find a particular object in a search space of size~$2^n$.
Of course, we would also like to show tight connections between SETH
and the optimal growth rates of problems that \emph{do} have
non-trivial exact algorithms.
Our prototypical such problem is \setcover:
Given a set system with~$n$ elements and~$m$ sets, we want to select a
given number~$t$ of sets that cover all elements.
Exhaustively trying all possible ways to cover the elements takes time
at most $2^m\poly(m)$.
However, $m$ could be much larger than~$n$, and it is natural to ask
for the best running time that one can achieve in terms of~$n$.
It turns out that a simple dynamic programming
algorithm~\cite{FominKW04} can solve \setcover in time $2^n\poly(m)$.
The natural question is whether the growth rate of this simple
algorithm can be improved.
While we are not able to resolve this question, we connect the
existence of an improved algorithm for \setcover to the existence of
faster algorithms for several problems.
Specifically, we show the following theorem.
\begin{theorem}\label{thm:dpsolve}
  Assume that, for all $\epsilon < 1$, there is a $k$ such that
  \setcover with sets of size at most $k$ cannot be computed in time
  $2^{\epsilon n}\poly(m)$.
  Then, for all $\epsilon<1$, we have:
\begin{enumerate}[label=(\roman*)]\setlength\itemsep{-.7mm}
    \item \steinertree cannot be computed in time $2^{\epsilon t}\poly(n)$,
    \item \cvc cannot be computed in time $2^{\epsilon t}\poly(n)$,
    \item \setpart cannot be computed in time $2^{\epsilon n}\poly(m)$, and
    \item \subsetsum cannot be computed in time $t^\epsilon\poly(n)$.
  \end{enumerate}
\end{theorem}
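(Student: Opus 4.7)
The plan is to give four reductions that map a $k$-\setcover instance $(U,\mathcal{F})$ with $|U|=n$ to an instance of the respective target problem in a way that keeps the blow-up on the relevant complexity parameter at most additive and of order $o(n)$. A hypothetical algorithm for the target problem beating its claimed bound then, for $k=k(\epsilon)$ chosen sufficiently large, yields a $2^{\epsilon' n}\poly(m)$-time algorithm for $k$-\setcover with $\epsilon'<1$, contradicting the assumption.

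For \setpart (iii), enlarge the family by adding every subset $T\subseteq S$ of every $S\in\mathcal{F}$; the $2^{k}$ blow-up keeps $m$ polynomial, and a minimum set cover in the original instance refines to a partition of at most the same size in the enlarged one by replacing each chosen set with an appropriate disjoint subset. For \steinertree (i) and \cvc (ii), build a graph with a root $r$, a ``set vertex'' $v_S$ adjacent to $r$ for each $S\in\mathcal{F}$, and an ``element vertex'' $u_e$ adjacent to $v_S$ whenever $e\in S$. In the Steiner Tree case, declare $\{r\}\cup\{u_e:e\in U\}$ to be the terminals so that the number of terminals is $n+1$, and a Steiner tree of weight $n+1+\tau$ encodes a set cover of size $\tau$. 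In the CVC case, attach a degree-one pendant to every element vertex so that each $u_e$ is forced into any vertex cover, and use the root $r$ to guarantee connectivity; the minimum CVC then has size $n+\tau+O(1)$. In both cases, the natural parameter $t$ is $O(n)$, so a $2^{\epsilon t}\poly(n)$-time algorithm pulls back to $2^{O(\epsilon n)}\poly(m)$ time for $k$-\setcover.

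For \subsetsum (iv), chain through the \setpart reduction. In the resulting sparse set-partition instance, assign element $e_i$ digit position $i$ in base $B$, weight each set $S$ by $w_S=\sum_{i\in S}B^{i-1}$, and take target $T=\sum_{i=1}^{n}B^{i-1}$. The base $B$ must be large enough that carries cannot let a non-partition subfamily sum to $T$, yet small enough that $T=2^{O(n)}$, since a $t^{\epsilon}\poly(n)$-time Subset Sum algorithm would then run in time $2^{O(\epsilon n)}\poly(m)$ on the original \setcover instance. This balancing act is the main obstacle: a naive choice forces $B$ larger than the maximum number of sets sharing a common element, which can be polynomial in $n$, blowing $T$ up to $2^{\omega(n)}$ and weakening the conclusion. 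Overcoming it requires either an additional sparsification step that bounds the element-set incidence of the enlarged family to $O(1)$ (exploiting the $k$-sparsity of the original instance) or a smarter carry-free encoding. With either fix in place, the hypothetical Subset Sum algorithm yields the desired contradiction. The Steiner Tree, CVC, and Set Partition reductions are otherwise standard; their main technical content is careful bookkeeping of the additive blow-up to the relevant parameter so that the growth rate transfers cleanly.
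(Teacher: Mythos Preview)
Your reductions for \setpart~(iii) and the graph constructions for \steinertree~(i) and \cvc~(ii) match the paper's. However, there is a genuine gap in (i) and (ii): the claim that the blow-up in the relevant parameter is ``additive and of order $o(n)$'' is not achieved by the construction you describe. In your \steinertree reduction the solution size is $t = n+1+\tau$, where $\tau$ is the size of the set cover being encoded; since $\tau$ can be as large as $n$ (nothing bounds the minimum cover size), you only get $t \le 2n+O(1)$. A hypothetical $2^{\epsilon t}\poly$ algorithm then yields a $2^{2\epsilon n}\poly$ algorithm for $k$-\setcover, contradicting the hypothesis only when $2\epsilon < 1$, i.e.\ for $\epsilon < 1/2$, not for all $\epsilon < 1$. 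The same factor-of-two loss appears in the \cvc reduction. Your phrase ``$t$ is $O(n)$, so \dots\ pulls back to $2^{O(\epsilon n)}$'' hides exactly this constant.

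The paper closes this gap with an explicit intermediate step (its Theorem~4.4): before reducing to \steinertree or \cvc, it first reduces $k$-\setcover parameterized by $n$ to $k'$-\setcover parameterized by $n+t$, by \emph{grouping} sets. For any constant $\alpha>0$, form the family of all unions of $q=\lceil 1/\alpha\rceil$ sets from $\mathcal{F}$; a cover of size $\tau$ becomes a cover of size $\lceil\tau/q\rceil\le \alpha n$ in the grouped instance (now with sets of size at most $qk$). After this, the \steinertree parameter is at most $(1+\alpha)n+O(1)$, and choosing $\alpha$ small relative to the hypothetical $\epsilon<1$ gives the contradiction for every such $\epsilon$. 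This grouping step is precisely the missing idea in your outline.

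For \subsetsum~(iv) you correctly isolate the carry obstacle but leave both proposed fixes unimplemented. The paper takes the ``smarter encoding'' route: instead of a large base, it prepends to each integer two extra fields, one recording the cardinality of the chosen subfamily (so that a fixed $t_0$ is enforced) and one recording $\sum_i |S_i|$. The latter forces the total number of $1$-bits contributed to the element field to be exactly $n$; combined with the target having all $n$ element positions equal to $1$, a digit-sum argument shows there can be no carries in that field, so base $2$ suffices and the target has only $n+O(\log n)$ bits. Your alternative sparsification suggestion does not obviously go through: bounding set sizes by $k$ says nothing about element degrees, and the \setpart step of adding all subsets only increases them.
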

All problems mentioned in this theorem have non-trivial algorithms
whose running times are as above with $\epsilon=1$
\cite{BjorklundHKK07,Nederlof09,CNP+11,FominKW04,introductionToAlg}.
Under the assumption in the theorem, we therefore obtain tight lower
bounds on the growth rate of exact algorithms for \steinertree, \cvc,
\setpart, and \subsetsum.
The best currently known algorithms for these problems share two
interesting common features.
First, they are all {\em dynamic programming} algorithms.
Thus, Theorem~\ref{thm:dpsolve} hints at \setcover being a
``canonical'' dynamic programming problem.
Second, the algorithms can all be modified to compute the number of
solutions modulo two in the same running time.
In fact, the currently fastest algorithm~\cite{CNP+11} for \cvc works
by reducing the problem to computing the number of solutions modulo
two.

While Theorem~\ref{thm:branch} is an equivalence,
Theorem~\ref{thm:dpsolve} is not.
One might ask whether it is possible to find reductions back to
\setcover and to strengthen Theorem~\ref{thm:dpsolve} in this manner.
We believe that this would be quite difficult:
A suitable reduction from, say, \steinertree to \setcover that proves
the converse of Theorem~\ref{thm:dpsolve} would probably also work for
$\epsilon=1$.
This would give an alternative proof that \steinertree can be computed
in time $2^t\poly(m)$.
Hence, finding such a reduction is likely to be a challenge since the
fastest known algorithms~\cite{BjorklundHKK07,Nederlof09} for
\steinertree are quite non-trivial --- it took more than 30 years
before the classical $3^{t}\poly(n)$-time Dreyfus--Wagner algorithm
for \steinertree was improved to $2^{t}\poly(n)$.
Similar comments apply to~\cvc since its $2^{t}\poly(n)$ time
algorithm is quite complex~\cite{CNP+11}.

The hardness assumption for \setcover in Theorem~\ref{thm:dpsolve}
needs some justification.
Ideally we would like to replace this assumption with SETH, that is,
we would like to prove that SETH implies the hardness assumption for
\setcover in Theorem~\ref{thm:dpsolve}.
We do not know a suitable reduction, but we are able to provide a
different kind of evidence for hardness:
We show that a $2^{\epsilon n} \poly(m)$-time algorithm to compute the
number of set covers modulo two would violate $\pseth$, which is a
hypothesis that implies $\seth$.
Formally, $\pseth$ asserts that, for all ${\epsilon<1}$, there exists a
(large) integer $k$ such that $k$-\pcnfsat cannot be computed in
time~$O(2^{\epsilon n})$.
Here, $k$-\pcnfsat is the problem of computing the number of
satisfying assignments of a given $k$-CNF formula modulo two.
It follows from known results~\cite{Calabro_isolation,Traxler} (see
also Section~\ref{sec:satprelim}) that, if $\seth$ holds, then so does
$\pseth$.
As a partial justification for the hardness assumption for \setcover in
Theorem~\ref{thm:dpsolve}, we provide the following theorem.
\begin{theorem}\label{thm:parityequiv}
  Each of the following statements is equivalent to $\pseth$:
  \begin{enumerate}[%
      label=(\roman*) $\,\forall \epsilon < 1 .\exists k.\,$,%
      leftmargin=8em]
    \item
      $k$-\pcnfsat,
      the parity satisfiability problem for $n$-variable $k$-CNF
      formulas,
      cannot be computed in time~$O(2^{\epsilon n})$.
    \item
      $k$-\pallhittingset,
      the parity hitting set problem for set systems over~$[n]$ with
      sets of size at most~$k$,
      cannot be computed in time~$O(2^{\epsilon n})$.
    \item
      $k$-\pallsetcover,
      the parity set cover problem for set systems over~$[n]$ with
      sets of size at most~$k$,
      cannot be computed in time~$O(2^{\epsilon n})$.
  \end{enumerate}
\end{theorem}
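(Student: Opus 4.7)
The first equivalence, $\pseth \Leftrightarrow$~(i), is exactly the definition of $\pseth$ and requires no argument. The real work is to establish $(i) \Leftrightarrow (ii) \Leftrightarrow (iii)$. The plan is to mirror the proof of Theorem~\ref{thm:branch} but to verify that every reduction used along the way is \emph{parsimonious}, i.e., it induces a bijection between the solution set of the source instance and the solution set of the target instance. Since a parsimonious reduction preserves the number of solutions exactly, it preserves the parity; so the existing SETH-preserving reductions will transport immediately to $\pseth$ once we audit them for solution-counting.

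First, I would establish a parity-preserving analog of the sparsification lemma: the Impagliazzo--Paturi--Zane procedure actually rewrites a $k$-CNF formula $\phi$ on $n$ variables as a \emph{disjoint} sum of sparse $k$-CNF formulas $\phi_1, \ldots, \phi_t$ on the same variable set, so that every satisfying assignment of $\phi$ satisfies exactly one $\phi_i$. Therefore the parity of satisfying assignments of $\phi$ equals the sum modulo $2$ of the parities of the $\phi_i$'s, each of which has $O_k(n)$ clauses. This lets us restrict to sparse instances for free.

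Next, for $(i) \Rightarrow (ii)$ and $(i) \Rightarrow (iii)$, I would apply the combinatorial reductions that reduce sparse $k$-\cnfsat{} to $k'$-\hittingset{} and $k'$-\setcover{} in Theorem~\ref{thm:branch}, working on the sparse instances produced by Step~1. The reductions produce set systems on $n' = (1 + o_k(1))\,n$ elements because sparsity keeps the number of clause-gadgets linear in $n$. I would then check that each reduction can be made parsimonious: each satisfying assignment maps to a unique hitting set (respectively set cover) in the constructed instance and vice versa. Any reduction that creates extraneous ``invalid'' solutions (e.g.\ supersets of minimum hitting sets, or covers that violate an implicit consistency constraint) is patched either by adding tagging elements that force a canonical representative, or by a final inclusion--exclusion over the spurious solutions. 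Because $n' = (1+o_k(1))\,n$ and $k'$ depends only on $k$, an $O(2^{\epsilon n'})$ algorithm for the target problem yields an $O(2^{\epsilon'\, n})$ algorithm for $k$-\pcnfsat{} with $\epsilon' < 1$, contradicting $\pseth$.

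Finally, the reverse directions $(ii) \Rightarrow (i)$ and $(iii) \Rightarrow (i)$ are easy: \cnfsat{} is expressive enough to directly simulate \hittingset{} and \setcover{} through parsimonious encodings (each set becomes a clause over literal-variables in the hitting-set case, and the usual CNF encoding of ``every element is covered'' in the set-cover case), so a faster parity algorithm for the target problem immediately yields a faster parity algorithm for $k$-\pcnfsat{}. The main obstacle is the parsimoniousness audit in Step~2: the reductions inherited from the SETH setting may create constant-factor solution blowups that are fatal for parity, and the technical heart of the proof is to redesign the gadgets (or to attach canonicalizing tags) so that the solution map becomes an exact bijection while still keeping $n' = (1 + o_k(1))\, n$ and the sets of size~$O_k(1)$.
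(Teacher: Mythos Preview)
Your plan is sound for (i)$\Leftrightarrow$(ii): the CNF-SAT $\to$ Hitting Set reduction from the proof of Theorem~\ref{thm:branch} is indeed parsimonious in the sense that satisfying assignments are in bijection with hitting sets \emph{of a prescribed size}, and the paper then uses exactly the kind of ``tagging'' patch you describe (an extra element per block so that hitting sets not of the correct size cancel in pairs modulo~2). So that half of your proposal matches the paper.

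The genuine gap is in (i)$\Leftrightarrow$(iii). You assume that Theorem~\ref{thm:branch} already contains a SETH-preserving reduction from \cnfsat{} to \setcover{} that you can audit for parsimony. It does not: \setcover{} is absent from Theorem~\ref{thm:branch}, and the paper explicitly leaves open whether SETH implies the corresponding hardness of \setcover{} (this is the red dashed arrow in Figure~\ref{fig:redgraph}). So there is no decision-level reduction for you to make parsimonious. Likewise, your proposed reverse direction (iii)$\Rightarrow$(i) via a ``usual CNF encoding'' of \setcover{} would introduce one Boolean variable per set, i.e.\ $m$ variables, whereas the parameter for \psetcover{} is the number of \emph{elements}~$n$; this mismatch kills the intended $2^{\epsilon n}$ bound.

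The paper circumvents both obstacles with a parity-specific trick that has no decision analogue: for any bipartite graph $G=(A\cup B,E)$, the number of independent sets of $G$ is congruent modulo~$2$ to $|\{X\subseteq A:N(X)=B\}|$, because $\sum_{X\subseteq A}2^{|B\setminus N(X)|}\equiv |\{X:N(X)=B\}|$. Applied to the incidence graph of a set system, this says that the parity of hitting sets equals the parity of set covers \emph{of the same set system}, giving $\sigma(\phittingset/n)=\sigma(\psetcover/n)$ at once. This ``flip'' is the missing idea in your proposal; without it, the link to \psetcover{} does not go through.
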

In the statement of Theorem~\ref{thm:parityequiv}, the \pallhittingset
and \pallsetcover problems are defined as follows: the input is a set
system and the objective is to compute the parity of the number of
hitting sets (resp.\ set covers) in the system.
An immediate consequence of Theorem~\ref{thm:parityequiv} that we find
interesting is that $\pseth$ rules out the existence of
$2^{\epsilon n}\poly(m)$-time algorithms to compute the number of set
covers of a set system, for any $\epsilon<1$.

Theorem~\ref{thm:parityequiv} together with the fact that the
algorithms for all problems mentioned in Theorem~\ref{thm:dpsolve} can
be modified to count solutions modulo two leads to the following
questions:
Can we show running time lower bounds for the counting versions of
these problems?
We show that this is indeed possible. In particular we show that,
assuming $\pseth$, there is no $2^{\epsilon t}\poly(n)$-time algorithm
that computes the parity of the number of Steiner trees that have size
at most~$t$,
and no $2^{\epsilon t}\poly(n)$-time algorithm that computes the
parity of the number of connected vertex covers that have size at
most~$t$.
Thus, unless $\pseth$ fails, any improved algorithm for \setcover,
\steinertree, or \cvc cannot be used to compute the parity of the
number of solutions.

We find it intriguing that $\seth$ and $\pseth$ can be used to show
tight running time lower bounds, sometimes for problems for which the
best algorithm has been improved several times, such as for
\steinertree or \cvc.
We feel that such sharp bounds are unlikely to just be a coincidence,
leading us to conjecture that the relationship between the considered
problems is even closer than what we show.
Specifically, we conjecture that $\seth$ implies the hardness
assumption for \setcover in Theorem~\ref{thm:dpsolve}.
This conjecture provides an interesting open problem.

Our results are obtained by a collection of reductions.
Section~\ref{sec:improvbranch} contains the reductions that constitute
the proof of Theorem~\ref{thm:branch}, and some of the reductions
needed for Theorem~\ref{thm:parityequiv}.
Section~\ref{sec:dp} contains the proof of Theorem~\ref{thm:dpsolve},
the remaining reductions for Theorem~\ref{thm:parityequiv}, and the
hardness results for counting Steiner trees and connected vertex
covers.
A schematic representation of our reductions can be found in
Figure~\ref{fig:redgraph}.

\begin{figure}[tbp]
  \label{fig:redgraph}
  \begin{center}

\begin{tikzpicture}[thick,outer sep=2pt,inner sep=2pt]

 \newcommand{\obs}{O } 
 \newcommand{\thm}{T } 
 \newcommand{\cor}{C } 

  \node[draw] (cnfsat)  at (-2,-1.5) {\cnfsat};
  \node (hittingset)    at (-4,-3.25) {\hittingset};
  \node (setsplitting)  at (-6,-1.5) {\setsplitting};
  \node (naesat)        at (-4, 0.25) {\naesat};

  \node (vspcircuitsat) at (-1,-4.5) {\vspcircuitsat};

  \draw[<-] (cnfsat)
  to[bend left]
  node[sloped,midway,above]
  {\thm\ref{thm:sattohit}}
  (hittingset);
  \draw[<-] (hittingset)
  to[bend left]
  node[sloped,midway,above]
  {\thm\ref{thm:hit-to-setsplitting}}
  (setsplitting);
  \draw[<-] (setsplitting)
  to[bend left]
  node[sloped,midway,below]
  {\obs\ref{obs:setsplitting-to-naesat-to-cnfsat}}
  (naesat);
  \draw[<-] (naesat)
  to[bend left]
  node[sloped,midway,below]
  {\obs\ref{obs:setsplitting-to-naesat-to-cnfsat}}
  (cnfsat);
  \draw[<-]
  (cnfsat)
  to node[sloped,midway,above]
  {\thm\ref{thm:vspcircuitsat}}
  (vspcircuitsat);


  \node (setcover) at (-2,-8) {\setcover};
  \draw[<->,dashed] (hittingset) -- (setcover)
  node[midway, anchor=east] {Open Problem};

  \node (setcovernt) at (-2,-10) {$\setcover/(n+t)$};
  \draw[<->] (setcover) -- (setcovernt)
  node[midway, anchor=west] {\thm\ref{thm:setcovera}};


  \node (steinertree) at (-5,-11.5) {$\steinertree/t$};
  \draw[<-] (setcovernt) -- (steinertree)
  node[sloped,midway,above] {\thm\ref{thm:steiner}};


  \node (cvc) at (-3,-12) {$\cvc/t$};
  \draw[<-] (setcovernt) -- (cvc)
  node[sloped,midway,below] {\thm\ref{thm:cvc}};


  \node (setpart) at (-6,-8.5) {\setpart};
  \draw[<-] (setcover) -- (setpart)
  node[sloped,midway,above]
  {\thm\ref{thm:setpart}};


  \node (subsetsum) at (-6,-10) {$\subsetsum/m$};
  \draw[<-] (setpart) -- (subsetsum)
  node[midway, anchor=east] {\thm\ref{thm:subsetsum}};

  \node (pcnfsat)       at (2,-1.5) {\pcnfsat};
  \node (phittingset)   at (4,-3.25) {\phittingset};
  \node (psetsplitting) at (6,-1.5) {\psetsplitting};
  \node (pnaesat)       at (4, 0.25) {\pnaesat};

  \draw[<-] (pcnfsat)
  to[bend right]
  node[sloped,midway,above]
  {\thm\ref{thm:sattohit}}
  (phittingset);
  \draw[<-] (phittingset)
  to[bend right]
  node[sloped,midway,above]
  {\thm\ref{thm:hit-to-setsplitting}}
  (psetsplitting);
  \draw[<-] (psetsplitting)
  to[bend right]
  node[sloped,midway,below]
  {\obs\ref{obs:setsplitting-to-naesat-to-cnfsat}}
  (pnaesat);
  \draw[<-] (pnaesat)
  to[bend right]
  node[sloped,midway,below]
  {\obs\ref{obs:setsplitting-to-naesat-to-cnfsat}}
  (pcnfsat);

  \draw[<-] (cnfsat) -- (pcnfsat)
  node[pos=.5, anchor=south, align=center]
  {\thm\ref{thm:isolation}\\[-.5\baselineskip]\tiny
  \cite{Calabro_isolation,Traxler}};

  \node[align=center,draw,dashed,label=right:{(\thm\ref{thm:hstosetcover}})] 
  (pallhittingset) at (4,-5.5) {\pallhittingset\\$=$\\\pallsetcover};

  \draw[->] (phittingset) -- (pallhittingset)
  node[sloped,midway,above] {\obs\ref{obs:allhitting to t-hitting}};

  \draw[<-] (pcnfsat)
  to[bend right=40] node[sloped,pos=.8,above] 
  {\thm\ref{thm:psattopallhit}}
  (pallhittingset);


  \node (psetcover) at (6,-8) {\psetcover};
  \draw[<-] (pallhittingset) -- (psetcover)
  node[sloped,midway,below]
  {\obs\ref{obs:allhitting to t-hitting}};

  \node (psetcovernt) at (6,-10) {$\psetcover/(n+t)$};
  \draw[<->] (psetcover) -- (psetcovernt)
  node[pos=.5, anchor=west] {\thm\ref{thm:psetcovera}};


  \node (psteinertree) at (3,-11.5) {$\psteinertree/t$};
  \draw[->] (psteinertree) -- (psetcovernt)
  node[sloped,midway,above] {\thm\ref{thm:steiner}};


  \node (pcvc) at (5,-12) {$\pcvc/t$};
  \draw[->] (pcvc) -- (psetcovernt)
  node[sloped,midway,below] {\thm\ref{thm:cvc}};





  \node (pallhittingsetm) at (2,-8) {$\pallhittingset/m$};
  \node (pcnfsatm) at (2,-10) {$\pcnfsat/m$};

  \draw[<->] (pallhittingset) -- (pallhittingsetm)
  node[sloped,midway,below] {\cor\ref{cor:hs m n}};

  \draw[->]
  (pcnfsatm)
  to node[midway,anchor=west]{\obs\ref{obs:sat m}}
  (pallhittingsetm);

\end{tikzpicture}

  \end{center}
  \caption{%
    Overview of the reduction graph in this paper.
    An arrow $\Pi/s\to\Pi'/s'$ depicts a reduction from the problem 
    $\Pi'$ to the problem $\Pi$ with the following implication:
    If $\Pi$ can be solved in time $c^s\cdot\poly$, then $\Pi'$ can be 
    solved in time $c^{s'}\cdot\poly$.
    The edge labels depict the theorem (T), corollary (C), or 
    observation (O) that contains the formal statement of the 
    reduction.
    When the size parameter $s$ is the number of vertices or variables 
    $n$, we omit it.
    Other parameters are: the number $m$ of clauses, hyperedges, or 
    the number of bits used to represent the input integers in 
    \subsetsum; and the size $t$ of the solution that we are looking 
    for.
    Note that the figure suppresses details about which reductions 
    require or preserve that the instances have bounded clause or 
    hyperedge width, or bounded density.
    On the left, we have decision problems, and on the right we have 
    parity problems; the two groups are related via the isolation 
    lemma~\cite{Calabro_isolation,Traxler}, cf.\ 
    Theorem~\ref{thm:isolation}.
    Furthermore, we observe a cluster on the top, which contains 
    problems for which the best-known algorithm is na\"ive brute 
    force; see Section~\ref{sec:improvbranch}.
    And there is a cluster on the bottom, which contains problems for 
    which the best-known algorithm has a dynamic programming flavor;
    see Section~\ref{sec:dp}.
    These two clusters are connected in the parity world via our 
    ``flip theorem'', Theorem~\ref{thm:hstosetcover}.
    In the decision world, this connection is an open problem: Does 
    SETH imply the assumption of Theorem~\ref{thm:dpsolve}?
  }
\end{figure}
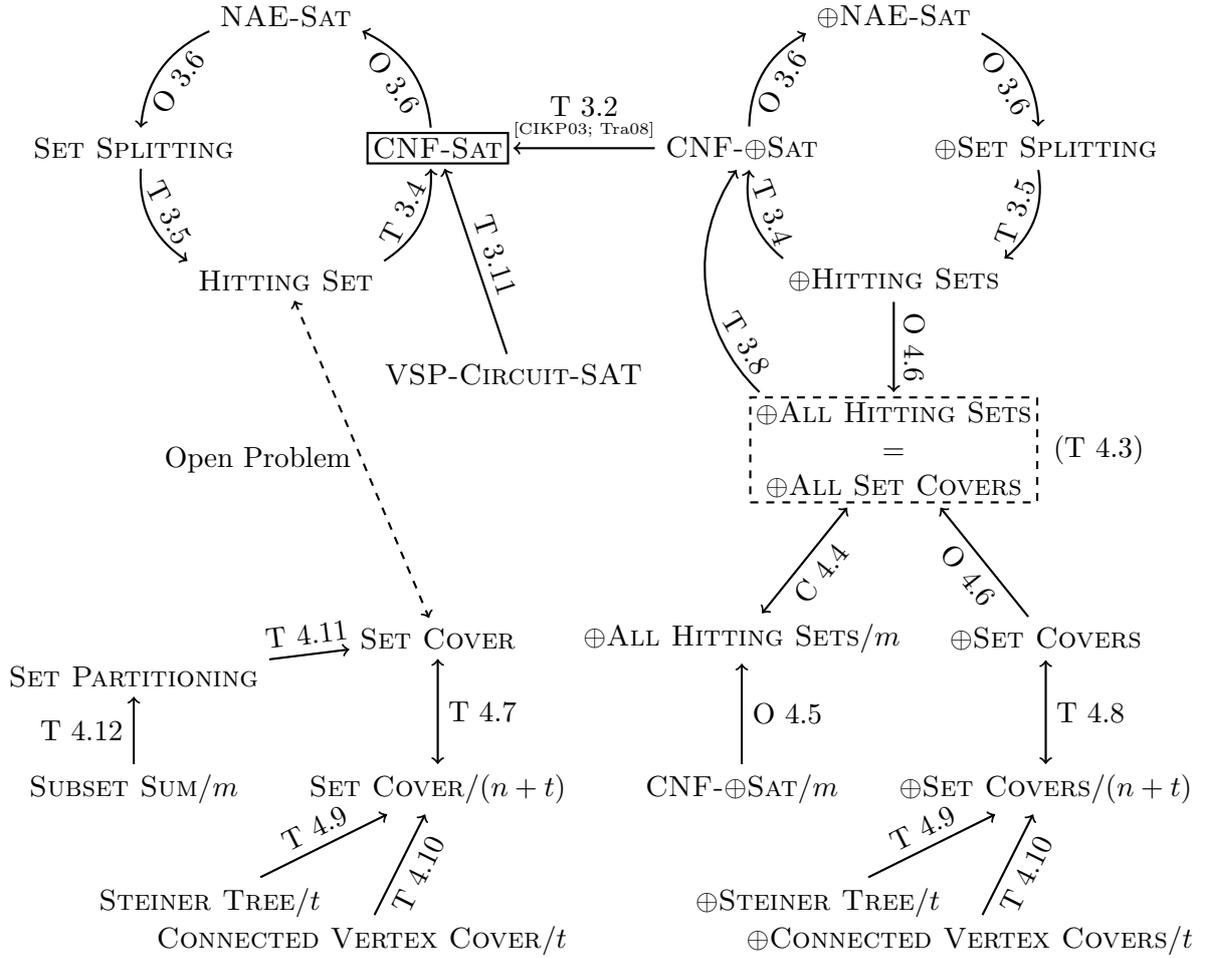

\section{Preliminaries and Notation}\label{sec:prel}

\subsection{General Notation}
In this paper, $\Delta$ denotes the symmetric difference
and~$\mathbin{\dot\cup}$ denotes the disjoint union. For a set~$U$ and
a positive integer $i \leq |U|$, we denote the family of all
subsets of~$U$ of size~$i$ by $\binom{U}{i}$.
In this paper, $\equiv$~will always denote congruence modulo~$2$, that
is, $i \equiv j$ holds for integers $i,j$ if and only if~$i$ and $j$
have the same parity.
Every assignment $\alpha\colon \{v_1,\ldots,v_n\} \to \{0,1\}$ to $n$
Boolean variables $v_1,\ldots,v_n$ is identified with the set
$A := \{v_i \mid \alpha(v_i) = 1\} \subseteq \{v_1,\ldots,v_n\}$.

\subsection{Problem definitions}
\label{sec:problems}

Since we consider a significant number of problems in this paper,
each of which has a few variants, we use the following notation for
clarity.
We write $k$-$\Pi$ for problems whose input consists of set systems of
sets of size at most $k$, or CNF formulas with clauses of width at
most~$k$.
We write $c$-{\sc Sparse}-$k$-$\Pi$ if, in addition, the set systems
or formulas that we get as input are guaranteed to have density at 
most~$c$, that is, the number of sets or clauses is at most $cn$, 
where $n$ is the number of elements or variables.

For each problem~$\Pi$ that we consider, we fix the canonical
NP-verifier that is implicit in the way we define the problem.
Then every yes-instance of $\Pi$ has associated with it a set of
NP-witnesses or ``solutions''.
We write $\oplus\Pi$ for the problem of deciding whether, for a given
instance, the number of its solutions is odd.
For many problems, we are looking for certain subsets of size at 
most~$t$, where $t$ is given as part of the input.
So when writing $\oplus\Pi$ in this case, we only count solutions of 
size at most~$t$.
Sometimes we want to count all solutions, not only those of at most a 
certain size.
In this case, we add the modifier \textsc{All} to the name; for 
example. while \phittingset is the problem of counting modulo two all 
hitting sets of size at most~$t$, the problem \pallhittingset counts 
\emph{all} hitting sets modulo two (regardless of their size).

We now state all problems that we consider in this paper, and we 
discuss how exactly the modifiers affect them.

\subsubsection{CNF Problems}
For CNF problems, the input is a CNF formula~$\varphi$.
We usually denote the number of variables by~$n$ and the number of 
clauses by~$m$.
The two basic problems that we consider are \cnfsat and \naesat.

\begin{description}
  \item[\cnfsat:] Does $\varphi$ have a satisfying assignment?
  \item[\naesat:] Does $\varphi$ have an assignment so that (i) the 
    first variable is set to true and (ii) each clause contains a 
    literal set to true and a literal set to false?
\end{description}
We added condition (i) to \naesat solely for the purpose of making its 
corresponding parity problem non-trivial.

\paragraph{Modifiers.}
In addition to these two basic problems, we can name new problems by 
adding one of the following modifiers to their names (which we do by 
example just for \cnfsat).
\begin{itemize}
  \item \cnfsatargs{$k$} is the problem in which the input 
    formula~$\varphi$ is guaranteed to have at most~$k$ literals in 
    each clause.
  \item \sparsecnfsatargs{$k$}{$c$} is the problem in which the input 
    formula~$\varphi$ is guaranteed to have at most~$k$ literals in 
    each clause and to have at most $m\le c\cdot n$ clauses.
\end{itemize}
The goal of the problem remains the same in both cases, and the two 
modifiers only affect the promise on the input.
In order to change the goal of the problem, we allow for the parity 
modifier, $\oplus$, to be put in front of the type of assignment that 
we are looking for, ie., we have \pcnfsat and \pnaesat.
The parity modifier can be combined with one of the input modifiers.

\subsubsection{Hypergraph Problems}

For problems on hypergraphs, the input is a set system $\mathcal 
F\subseteq 2^U$, which consists of subsets of some universe~$U$.
The elements of $U$ are called \emph{vertices} and the elements of 
$\mathcal F$ are called \emph{hyperedges}.
The number of vertices is usually denoted by~$n$ and the number of 
hyperedges by~$m$.
The goal in all of these problems will be to find or count subsets 
of~$U$ that have special properties with respect to $\mathcal F$, or 
to do the dual and find or count subsets of the set system~$\mathcal 
F$ that have a special property.
Often there will be an additional input $t\in\nats$ that will 
determine that we are looking for a subset $S$ or a subfamily of size 
at most $t$.

We have the following four basic hypergraph problems:

\begin{description}
  \item[\hittingset:]
    Does $\mathcal F$ have a hitting set of size at most $t$, that is, 
    a subset $H \subseteq U$ with $|H| \leq t$ such that $H \cap S 
    \neq \emptyset$ for every $S \in \mathcal{F}$?
  \item[\setcover:]
    Does $\mathcal F$ have a set cover of size at most $t$, that is, a 
    subset $\mathcal{C} \subseteq \mathcal{F}$ with $|\mathcal{C}|\leq 
    t$ such that $\bigcup_{S \in \mathcal{C}}S=U$?
  \item[\setpart \normalfont{(or \textsc{Perfect Set Matching})}:]
    Does $\mathcal{F}$ have a set partitioning of size at most~$t$, 
    that is, a set cover~$\mathcal C$ such that, for every $S,S' \in 
    \mathcal{C}$
    with $S \neq S'$, we have $S \cap S' = \emptyset$?
  \item[\setsplitting:]
    Is there a subset $X\subseteq U$ such that (i) the first element 
    of the universe is a member of $X$ and (ii), for every 
    $S\in\mathcal{F}$, neither $S\subseteq X$ nor $S\subseteq(U-X)$?
\end{description}
Note that the first three problems have the additional input 
$t\in\nats$, while the last problem does not.
Similar to our definition of \naesat, we added condition (i) in 
\setsplitting solely for the purpose of making the corresponding 
parity problem non-trivial.

\paragraph{Modifiers.}
The input modifiers such as in \hittingsetargs{$k$} or 
\sparsehittingsetargs{$k$}{$c$} work as before in the case of CNF 
problems.
The number $k$ promises that all sets $S$ in the set system~$\mathcal 
F$ will have size at most $k$, and the number $c$ promises that the 
number $m$ of sets is at most $c\cdot n$.
We also introduce the parity modifier, $\oplus$, just as before.
For example, in \phittingset, we are given $t$ and $\mathcal F$, and 
we want to count the number of hitting sets of size at most $t$ modulo 
two.

Interestingly, for parity problems, we can prove hardness results also 
for the case in which the input parameter~$t$ is guaranteed to be 
$t=n$.
For decision problems, this setting of~$t$ is trivial, but the 
counting case turns out to be still interesting.
To make this distinction clear, we add the modifier \textsc{All} in 
front of the object that we are counting.
For clarity, we give the definition of the following modified version 
of \hittingset{}:

\vspace{1em}
\noindent
\pallhittingset\vspace{-0.5em}
\begin{description}
  \item[\normalfont{\it Input:}]
    A set system $\mathcal{F} \subseteq 2^{U}$.\vspace{-0.5em}
  \item[\normalfont{\it Question:}]
    Does $\mathcal F$ have an odd number of hitting sets (of any 
    size)?
\end{description}

\subsubsection{Graph Problems}

In graph problems, the input is a graph $G=(V,E)$ with $n$ vertices 
and $m$ edges, and often there is some additional input, such as a 
number $t\in\nats$ or a set of terminals $T\subseteq V$.
We consider the following basic graph problems:
\begin{description}
  \item[\cvc:]
    Does $G$ have a connected vertex cover of size at most $t$, that 
    is, a subset $X \subseteq V$ such that $|X| \leq t$, the induced 
    subgraph $G[X]$ is connected, and $X \cap e \neq \emptyset$ holds 
    for every edge $e \in E$?
  \item[\steinertree:]
    Does $G$ has a Steiner tree of size at most $t$ between the 
    terminals $T\subseteq V$, that is, is there a subset $X\subseteq 
    V$ so that $|X| \leq t$, the induced subgraph $G[X]$ is connected, 
    and $T\subseteq X$?
\end{description}
For these problems, we will only use the parity modifier.
So for example, in \pcvc, we are given $G$ and $t$, and we want to 
count the number of connected vertex covers of size at most $t$ modulo 
two.


\subsubsection{Other Problems}

\vspace{1em}
\noindent
\subsetsum\vspace{-0.5em}
\begin{description}
\item[\normalfont{\it Input:}] Integers $a_1,\ldots,a_n \in 
  \mathbb{Z}_+$
and a target integer $t$ on $m$ bits.\vspace{-0.5em}
\item[\normalfont{\it Question:}] Is there a subset $X \subseteq 
  \{1,\ldots,n\}$ with $\sum_{i \in X}a_i=t$?
\end{description}

\vspace{1em}
\noindent
\vspcircuitsatargs{$c$}\vspace{-0.5em}
\begin{description}
\item[\normalfont{\it Input:}] A $cn$-size Valiant series-parallel 
  circuit over $n$ variables.\vspace{-0.5em}
\item[\normalfont{\it Question:}] Is there a satisfying assignment?
\end{description}

\subsection{The optimal growth rate of a problem}
Running times in this paper have the form $c^n\cdot\poly(m)$,
where~$c$ is a nonnegative constant, $m$~is the total size of the
input, and~$n$ is a somewhat smaller parameter of the input, typically
the number of variables, vertices, or elements.
The constant~$c$ is the \emph{growth rate} of the running time, and it
may be different for different choices for the parameter~$n$.
To make this parameterization explicit, we use the notation $\Pi/n$.
For every such parameterized problem, we now define the
number~$\sigma=\sigma(\Pi/n)$.
\begin{definition}
  For a parameterized problem $\Pi/n$,
  let $\sigma(\Pi/n)$ be the infimum over all $\sigma > 0$ such that
  there exists a randomized $2^{\sigma n}\poly(m)$-time algorithm
  for $\Pi$ whose error probability is at most $1/3$.
\end{definition}
The \emph{optimal growth rate} of $\Pi$ with respect to~$n$ is
$C:=2^{\sigma(\Pi/n)}$.
If the infimum in the definition above is a minimum, then $\Pi$ has an
algorithm that runs in time $C^n\poly(m)$ and no algorithm for $\Pi$
can have a running time $c^n\poly(m)$ for any $c<C$.
On the other hand, if the minimum does not exist, then no algorithm
for~$\Pi$ can run in time $C^n\poly(m)$, but $\Pi$ has a
$c^n\poly(m)$-time algorithm for every $c>C$.
We formally define the Strong Exponential Time Hypothesis (SETH) as 
the assertion that
${\lim_{k\to\infty} \sigma(\cnfsatargs{$k$}/n) = 1}$.

We remark that it is consistent with current knowledge that SETH fails
and yet \cnfsat (without restriction on the clause width) does not 
have $2^{\epsilon n}\poly(m)$-algorithms for any~${\epsilon<1}$:
If SETH fails, then $k$-\cnfsat has, say, $k^{k}1.99^n$-time
algorithms for every~$k$, which does not seem to translate to a
$2^{\epsilon n}\poly(m)$-time algorithm for \cnfsat for any
$\epsilon<1$.

\section{On Improving Branching Algorithms}\label{sec:improvbranch}

In this section we show that significantly faster algorithms for
search problems such as \hittingset{} and \setsplitting{} imply
significantly faster algorithms for \cnfsat{}.
More precisely, we prove that the growth rates of these problems are
equal, or equivalently,
$\sigma(\cnfsat/n)=\sigma(\hittingset{}/n)=\sigma(\setsplitting{}/n)$.
We also give a reduction from \pcnfsat{} to \pallhittingset{}, thus
establishing a connection between the parity versions of these two
problems.

\subsection{Previous results for CNF-SAT}
\label{sec:satprelim}

In the following few subsections, we show reductions from \cnfsat/$n$
to \hittingset/$n$ and \setsplitting/$n$.
These reductions work even when the given instance of \cnfsat/$n$ is
dense, that is, when there is no bound on the number of clauses that
is linear in the number of variables.
However, our starting point in Section~\ref{sec:dp} is the
\sparsehittingset/$n$ problem, where the number of sets in the set
system is linear in~$n$.
For this reason we formulate our results for the sparse versions of
\hittingset/$n$ and \setsplitting/$n$, and we develop a sparse version
of SETH first.

The sparsification lemma by Impagliazzo et al.~\cite{IPZ01} is that
every $k$-CNF formula $\varphi$ can be written as the disjunction of
$2^{\epsilon n}$ formulas in $k$-CNF, each of which has at most
$c(k,\epsilon)\cdot n$ clauses.
Moreover, this disjunction of sparse formulas can be computed
from~$\varphi$ and~$\epsilon$ in time $2^{\epsilon n}\cdot\poly(m)$.
Hence, the growth rate of $k$-\cnfsat for formulas of density at most
$c(k,\epsilon)$ is \mbox{$\epsilon$-close} to the growth rate of general
$k$-\cnfsat.
More precisely,
for every $k$ and every $\epsilon>0$, we have
$\sigma\big(\sparsecnfsatargs{$k$}{$c$}/n\big)
 \leq
 \sigma\big(\cnfsatargs{$k$}/n\big)
 \leq
 \sigma\big(\sparsecnfsatargs{$k$}{$c$}/n\big)+\epsilon$,
where the first inequality is trivial and the second inequality
follows from the sparsification lemma.
The density $c=c(k,\epsilon)$ is the \emph{sparsification constant},
and the best known bound is $c(k,\epsilon)=(k/\epsilon)^{3k}$~\cite{CIP06}.
By setting $\epsilon=\epsilon(k)=o(1)$, this immediately yields
the following theorem.
\begin{theorem}[\cite{IPZ01,CIP06}]\label{thm:sparsificationlemma}
  For every function $c=c(k)\geq(\omega(k))^{3k}$, we have
  \begin{align*}
    \lim_{k \to \infty} \sigma\Big(\cnfsatargs{$k$}/n\Big)
    & =
    \lim_{k \to \infty} \sigma\Big(\sparsecnfsatargs{$k$}{$c$}/n\Big)
    \,.
  \end{align*}
\end{theorem}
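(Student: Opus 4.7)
The plan is to derive the theorem directly from the sparsification-based inequality that the authors have just recalled, namely
\[
\sigma\bigl(\sparsecnfsatargs{$k$}{$c$}/n\bigr)
 \;\leq\;
 \sigma\bigl(\cnfsatargs{$k$}/n\bigr)
 \;\leq\;
 \sigma\bigl(\sparsecnfsatargs{$k$}{$c$}/n\bigr)+\epsilon,
\]
which holds whenever $c \geq c(k,\epsilon) = (k/\epsilon)^{3k}$. The left inequality is trivial since sparse is a restriction, so after taking $k\to\infty$ it already gives $\lim_k \sigma(\sparsecnfsatargs{$k$}{$c$}/n)\leq \lim_k \sigma(\cnfsatargs{$k$}/n)$.

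For the reverse direction, the key step is to choose a ``diagonal'' $\epsilon$ that depends on $k$ and vanishes in the limit, while still being compatible with the given density function $c(k)$. Write the hypothesis $c(k)\geq (\omega(k))^{3k}$ as $c(k)\geq (f(k))^{3k}$ for some function $f$ with $f(k)/k\to\infty$, and set $\epsilon(k) := k/f(k)$, so that $\epsilon(k)\to 0$. With this choice, $c(k,\epsilon(k)) = (k/\epsilon(k))^{3k} = (f(k))^{3k}\leq c(k)$, so every instance produced by the sparsification lemma at parameter $\epsilon(k)$ is a valid instance of $\sparsecnfsatargs{$k$}{$c(k)$}$. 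The right inequality above therefore applies and yields
\[
 \sigma\bigl(\cnfsatargs{$k$}/n\bigr)
 \;\leq\;
 \sigma\bigl(\sparsecnfsatargs{$k$}{$c(k)$}/n\bigr) + \epsilon(k).
\]

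Taking $k\to\infty$ on both sides, the error term $\epsilon(k)$ disappears, which gives $\lim_k \sigma(\cnfsatargs{$k$}/n)\leq \lim_k \sigma(\sparsecnfsatargs{$k$}{$c$}/n)$. Combined with the trivial direction, this proves equality of the two limits. (If one prefers to argue from scratch rather than cite the displayed inequality, the second inequality is obtained by running the sparsification lemma at parameter $\epsilon(k)$ to produce $2^{\epsilon(k) n}$ sparse sub-instances in time $2^{\epsilon(k) n}\poly(m)$ and solving each of them with the hypothetical sparse solver, for a total factor of $2^{(\sigma+\epsilon(k))n}\poly(m)$.)

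There is essentially no serious obstacle: the only mild subtlety is the diagonal choice of $\epsilon(k)$, which must be small enough to tend to zero yet large enough that the sparsification constant $(k/\epsilon(k))^{3k}$ stays within the allowed density $c(k)$. The assumption $c(k)\geq(\omega(k))^{3k}$ is precisely calibrated so that such a choice exists; any stronger lower bound on $c(k)$ would allow an even faster decay of $\epsilon(k)$, while a slower growth than $(\omega(k))^{3k}$ would fail to accommodate the $(k/\epsilon)^{3k}$ bound of~\cite{CIP06} for any $\epsilon\to 0$.
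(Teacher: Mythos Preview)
Your proposal is correct and follows exactly the approach the paper sketches in the paragraph preceding the theorem: use the sandwich $\sigma(\text{sparse})\le\sigma(\text{general})\le\sigma(\text{sparse})+\epsilon$ from the sparsification lemma, then take a diagonal $\epsilon=\epsilon(k)\to 0$ chosen so that $(k/\epsilon(k))^{3k}\le c(k)$. Your explicit choice $\epsilon(k)=k/f(k)$ with $f(k)=\omega(k)$ is precisely what the paper's terse ``setting $\epsilon=\epsilon(k)$'' intends (their ``$\omega(1)$'' is evidently a typo for $o(1)$), so there is nothing to add.
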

Hence, SETH is equivalent to the right-hand side being equal to~$1$.
In \cite{DHMTW12} it was observed that the sparsification lemma can be
made parsimonious, which gives the following equality for the same
functions $c=c(k)$:
  \begin{align*}
    \lim_{k \to \infty} \sigma\Big(\pcnfsatargs{$k$}/n\Big)
    & =
    \lim_{k\to \infty} \sigma\Big(\psparsecnfsatargs{$k$}{$c$}/n\Big)
    \,.
  \end{align*}
We define \pseth{} as the assertion that these limits are equal to~$1$.
The isolation lemmas for $k$-CNF
formulas~\cite{Calabro_isolation,Traxler} immediately yield that
\seth{} implies \pseth{}. More precisely, we have the following
theorem.
\begin{theorem}[\cite{Calabro_isolation,Traxler}]\label{thm:isolation}
  $\lim_{k \to \infty} \sigma(\cnfsatargs{$k$}/n)
  \leq  \lim_{k \to \infty} \sigma(\pcnfsatargs{$k$}/n)\,$.
\end{theorem}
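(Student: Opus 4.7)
The plan is to apply the Valiant--Vazirani-style isolation lemma for $k$-CNF formulas from \cite{Calabro_isolation,Traxler}, combined with standard probability amplification, to reduce $k$-\cnfsat to $k'$-\pcnfsat where $k' = k + O(1)$. Concretely, the statement I would isolate and invoke is: for some absolute constant $c_0$ and every $k \geq 3$, there is a randomized polynomial-time procedure that takes a $k$-CNF formula $\varphi$ over $n$ variables and outputs a $(k+c_0)$-CNF formula $\varphi'$ over the same variable set such that every satisfying assignment of $\varphi'$ also satisfies $\varphi$, and whenever $\varphi$ is satisfiable, $|\mathrm{sat}(\varphi')|$ is odd with probability at least $1/q(n)$ for some polynomial $q$. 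The procedure constructs $\varphi'$ by appending to $\varphi$ a random collection of short parity constraints, each on only $O(1)$ variables and hence rewritable as a gadget of width $\leq k+c_0$.

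Assuming this lemma, the reduction is routine. I would repeat the randomized construction $\Theta(q(n))$ times independently, feed each resulting $(k+c_0)$-CNF formula to the given $(k+c_0)$-\pcnfsat algorithm, and output ``satisfiable'' iff at least one call returns ``odd.'' Correctness is immediate: an unsatisfiable $\varphi$ yields only unsatisfiable $\varphi'$, while a satisfiable $\varphi$ produces an odd count in at least one trial with probability $\geq 1 - (1-1/q(n))^{\Theta(q(n))} \geq 2/3$. The overhead is $\poly(n)$, and crucially the number of variables is preserved, so for any $\epsilon > 0$ an algorithm for $(k+c_0)$-\pcnfsat running in time $2^{(\sigma(\pcnfsatargs{$k+c_0$}/n)+\epsilon)n}$ yields a randomized $k$-\cnfsat algorithm in time $\poly(n)\cdot 2^{(\sigma(\pcnfsatargs{$k+c_0$}/n)+\epsilon)n}$. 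Letting $\epsilon \to 0$ gives
\[
  \sigma(\cnfsatargs{$k$}/n) \;\leq\; \sigma(\pcnfsatargs{$k+c_0$}/n),
\]
and since $k+c_0 \to \infty$ as $k \to \infty$, passing to the limit yields the inequality in the theorem.

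The main obstacle is the isolation lemma itself, which is where all the real work lies. The tension is that the added parity constraints must be \emph{wide enough} (involving a random subset of variables) so that conditioning on them actually drives the number of surviving satisfying assignments toward $1$, yet \emph{narrow enough} (constant width in $k$) that they can be rewritten as $(k+c_0)$-CNF clauses without blowing up the clause width back to $\omega(1)$. The arguments of \cite{Calabro_isolation,Traxler} resolve this by appending a sequence of independent bounded-width random parities and showing, via a second-moment / entropy-style calculation, that exactly one assignment survives with probability $1/\poly(n)$. Once that technical ingredient is taken as granted, the amplification and limit-passing above finish the proof.
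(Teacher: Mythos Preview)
The paper does not actually give its own proof of this theorem: it is stated as a citation to \cite{Calabro_isolation,Traxler} with only the one-line remark that ``the isolation lemmas for $k$-CNF formulas immediately yield that \seth{} implies \pseth{}.'' Your sketch is therefore not competing against a proof in the paper but rather reconstructing the intended argument from the cited literature, and in that respect it is correct: the reduction goes via the $k$-CNF isolation lemma to Unique-$k'$-SAT with $k'=k+O(1)$ on the same variable set, then trivially to $\oplus k'$-SAT, with $\poly(n)$ repetitions for amplification, and finally one takes the limit in $k$. That is exactly the content of the cited results and exactly what the paper is invoking.

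One minor caution on the technical description: the isolation procedures in \cite{Calabro_isolation,Traxler} are not literally ``append independent bounded-width random parities'' in the naive Valiant--Vazirani sense (that would require $\Theta(n)$-wide XORs). The actual mechanisms are more delicate---e.g., random partial restrictions combined with short constraints, or carefully structured hashing---precisely to keep the clause width at $k+O(1)$. You already flag this as ``where all the real work lies,'' so the outline is fine, but the one-sentence summary of \emph{how} the isolation is achieved is a bit loose.
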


\subsection{From CNF-SAT to Hitting Set}
\label{subsection:cnfsattohset}

Here we will reduce \sparsecnfsat to \sparsehittingset.
For this, and also for the reduction from \pcnfsat to \pallhittingset 
in Section~\ref{subsec:pcnftophs}, the following construction will be 
useful.

Given a CNF formula $\varphi = C_1 \wedge \ldots \wedge C_m$ over $n$
variables $v_1,\ldots,v_n$ and an odd integer $p \ge 3$ that
divides~$n$, we construct the set system
$\mathcal{F}_{\varphi,p} \subseteq 2^{U}$ as follows.
%
%

\begin{enumerate}
	\item
    Let $p'$  be the odd integer $p' = p +  2 \lceil \log_2 p \rceil$,
    and let $U=\{u_1,\ldots,u_{n'}\}$ with $n'=p'\cdot n/p$.
	\item
    Partition the variables of $\varphi$ into
    blocks $V_i$ of size $p$, i.e.,
    $V_i := \{v_{pi+1},\ldots,v_{p(i+1)}\}$.
	\item
    Partition $U$ into blocks $U_i$ of size $p'$, i.e., $U_i =
    \{u_{p'i+1},\ldots,u_{p'(i+1)}\}$.
	\item
    Choose an arbitrary injective function $\displaystyle \psi_i
    \colon 2^{V_i} \rightarrow { \binom{U_i}{\lceil p' / 2 \rceil}}$.
    This exists since
    \[
      \left|\binom{U_i}{\lceil p' / 2 \rceil}\right|
      =
      \binom{p'}{\lceil p' / 2 \rceil}
      \geq
      \frac{2^{p'}}{p'}\geq\frac{2^pp^2}{p+2\lceil\log_2 p\rceil} \geq
      2^p = \left|2^{V_i}
      \right|.
    \]
    We think of $\psi_i$ as a mapping that, given an assignment to the
    variables of $V_i$, associates with it a subset of $U_i$ of size
    $\lceil p' / 2 \rceil$.
  \item
    If $X \in \binom{U_i}{\lceil p' / 2 \rceil}$ for some $i$, then
    add the set $X$ to $\mathcal{F}_{\varphi,p}$.
  \item
    If $X \in \binom{U_i}{\lfloor p' / 2 \rfloor}$ for some $i$ such
    that $\psi^{-1}_i(\{U_i \setminus X\})=\emptyset$, then add the
    set $X$ to $\mathcal{F}_{\varphi,p}$.
  \item
    For every clause $C$ of $\varphi$, do the following:
    \begin{itemize}
      \item
        Let $I = \{ 1 \leq j \leq \frac{n}{p} \mid C \text{ contains a
        variable of block } V_j\}$;
      \item
        For every $i \in I$, we let $\mathcal{A}_i$ be the set
        \begin{equation*}\hspace{-3em}
          \left\{ A \in \binom{U_i}{\lfloor p' /2\rfloor}
          \,\middle|\,
          \text{some assignment in } \psi^{-1}_i(\{U_i\setminus A\})
          \text{ sets all literals in } C\cap V_i \text{ to false}
        \right\};
      \end{equation*}
      \item
        For every tuple $(A_i)_{i \in I}$ with $A_i \in
        \mathcal{A}_i$, add the set $\bigcup_{i\in I}A_i$ to
        $\mathcal{F}_{\varphi,p}$.
    \end{itemize}
\end{enumerate}

\begin{lemma}
  \label{lem:sattohit}
  For every $n$-variable CNF formula~$\varphi$ and every odd integer
  $p\geq 3$ that divides~$n$,
  the number of satisfying assignments of~$\varphi$ is equal to the
  number of hitting sets of size $\lceil\frac{p'}{2}\rceil\frac{n}{p}$
  of the set system~$\mathcal{F}_{\varphi,p}$,
  where $p'=p+2\lceil \log_2 p \rceil$.
\end{lemma}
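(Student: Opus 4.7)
The plan is to exhibit an explicit bijection between satisfying assignments of $\varphi$ and hitting sets of $\mathcal{F}_{\varphi,p}$ of size $\lceil p'/2 \rceil \cdot n/p$. The natural candidate sends a satisfying assignment $\alpha$ to $H_\alpha := \bigcup_i \psi_i(\alpha|_{V_i})$, which by construction meets every block $U_i$ in a set of size $\lceil p'/2 \rceil$ and thus has the prescribed total size. Injectivity of the map is immediate from the injectivity of each $\psi_i$, so the work is to verify (a) that $H_\alpha$ is indeed a hitting set whenever $\alpha$ satisfies $\varphi$, and (b) that every hitting set of the correct size arises as some $H_\alpha$.

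For (a), I would check the three families of sets in turn. Sets from item~5 are hit because $|H_\alpha \cap U_i| + \lceil p'/2 \rceil = p'+1 > p'$, using that $p'$ is odd. Sets $X$ from item~6 are hit because the only way for $X$ to be disjoint from $H_\alpha$ is $X = U_i \setminus \psi_i(\alpha|_{V_i})$, contradicting the defining condition $\psi_i^{-1}(\{U_i \setminus X\}) = \emptyset$. For sets $\bigcup_{i\in I} A_i$ from item~7 associated with a clause $C$, being unhit would force $A_i = U_i \setminus \psi_i(\alpha|_{V_i})$ for every $i\in I$; by the definition of $\mathcal{A}_i$ this means $\alpha|_{V_i}$ falsifies every literal of $C$ whose variable lies in $V_i$, so $\alpha$ falsifies $C$ entirely, contradicting that $\alpha$ is satisfying.

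For (b), given a hitting set $H$ of size $\lceil p'/2\rceil\cdot n/p$, I first use item~5 to force $|H\cap U_i|\ge \lceil p'/2\rceil$ in each block (otherwise a $\lceil p'/2\rceil$-subset of $U_i\setminus H$ exists and is unhit); combined with the size bound this gives equality in every block. Next, I use item~6 to force $H\cap U_i$ to lie in the image of $\psi_i$: otherwise $X := U_i\setminus (H\cap U_i)$ would be a set of item~6 disjoint from $H$. This lets me define $\alpha$ by $\alpha|_{V_i} := \psi_i^{-1}(H\cap U_i)$, and then $H = H_\alpha$ by construction. Finally, I show $\alpha$ satisfies $\varphi$ by contradiction: if some clause $C$ is falsified by $\alpha$, then setting $A_i := U_i\setminus(H\cap U_i) = U_i\setminus \psi_i(\alpha|_{V_i})$ places each $A_i$ in $\mathcal{A}_i$, producing an item~7 set $\bigcup_{i\in I} A_i$ disjoint from $H$.

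I expect the main subtlety to lie in keeping the complement bookkeeping straight: the entire argument hinges on the parity identity $\lceil p'/2\rceil + \lfloor p'/2\rfloor = p'$ for odd $p'$, so that in each block the hitting set and the ``bad'' subset it must avoid are genuine complements. Once the sizes are pinned down block-by-block, items~5,~6, and~7 translate verbatim into the conditions that (i) each $H\cap U_i$ is a $\lceil p'/2\rceil$-subset, (ii) it lies in the image of $\psi_i$ and thus encodes an assignment, and (iii) no clause is fully falsified; these are exactly what a satisfying assignment provides, closing the bijection. The verification is then largely mechanical; no further reductions or probabilistic tools are needed at this stage.
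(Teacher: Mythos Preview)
Your proposal is correct and follows essentially the same approach as the paper: both establish the bijection $\alpha \mapsto \bigcup_i \psi_i(\alpha|_{V_i})$, verify that it lands in hitting sets of the prescribed size by checking items~5--7 via the same pigeonhole/complement arguments, and then use items~5, 6, and~7 in succession to show that any hitting set of the correct size decodes uniquely to a satisfying assignment. The only cosmetic difference is that the paper separates the inverse direction into two claims (``$\psi(A)$ a hitting set implies $A$ satisfying'' and ``$H$ outside the image of $\psi$ implies $H$ not a hitting set''), whereas you fold these into a single step~(b).
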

\begin{proof}
For convenience denote $g=\frac{n}{p}$. Define $\psi \colon 2^V
\rightarrow 2^U$ as $\psi(A) = \bigcup_{i=1}^{g}\psi_i(A \cap V_i)$.
Note that $\psi$ is injective, since for every $i$, $\psi_i$ is
injective. Hence to prove the lemma, it is sufficient to prove that
(1) $A$ is a satisfying assignment if and only if $\psi(A)$ is a
hitting set of size $\lceil\frac{p'}{2}\rceil g$, and
(2) if there is no assignment $A \subseteq V$ such that $\psi(A)=H$,
than no set $H \subseteq U$ of size $\lceil\frac{p'}{2} \rceil g$ is a
hitting set of $\mathcal{F}_{\varphi, p}$.

For the forward direction of (1), note that the sets added in Step 5
are hit by the pigeon-hole principle since $|\psi_i(A \cap
V_i)|=\lceil\frac{p'}{2}\rceil$ and $p'$ is odd.
For the sets added in Step 6, consider the following. The set $X$ of
size $\lfloor p' / 2 \rfloor$ is added because for some $i$,
$\psi^{-1}_i(\{U_i \setminus X\})=\emptyset$. Thus $\psi_{i}(A \cap
V_i)$ automatically hits $X$.
For the sets added in Step 7, consider a clause $C$ of $\varphi$ and
the associated index set $I$ as in Step 7.  Since $A$ is a satisfying
assignment of $\varphi$, there exists $i\in I$ such that $A$ sets at
least one variable in $C\cap V_i$ to true.  Hence, $U_i \setminus
\psi_i(A\cap V_i) \not\in \mathcal{A}_i$.  On the other hand, $U_i
\setminus \psi_i(A\cap V_i)$ is the only member of
$\mathcal{F}_{\varphi,p}$ that cannot be hit by $\psi(A\cap V_i)$.
Therefore, all sets added in Step 7 are hit by $\psi(A)$.
It is easy to check that $\psi(A)$ has size $\lceil\frac{p'}{2}\rceil
g$ since there are $g$ blocks.

For the reverse direction of (1), let $A$ be an assignment such that
$\psi(A)$ is a hitting set of size $\lceil\frac{p'}{2}\rceil g$.  We
show that $A$ is a satisfying assignment of $\varphi$.  Suppose for
the sake of contradiction that a clause $C$ is not satisfied by $A$,
and let $I$ be as defined in Step 7 for this $C$.  Since $\psi(A)$ is
a hitting set, $|\psi(A) \cap U_i| \geq \frac{p'}{2}$ for every $i$
because it hits all sets added in Step 5.  More precisely, $|\psi(A)
\cap U_i| = \lceil\frac{p'}{2}\rceil$ because $|\psi(A)| =
\lceil\frac{p'}{2}\rceil g$ and there are $g$ disjoint blocks
$U_1,\ldots,U_g$.  Therefore, $|U_i \setminus \psi(A)| = \lfloor
\frac{p'}{2} \rfloor$, and so $U_i \cap \psi(A) = U_i \setminus (U_i
\setminus \psi(A))$ is a member of $\mathcal{A}_i$ for every $i$.
This means that in Step 7 the set $\bigcup_{i\in I}A_i$ with $A_i =
U_i \setminus \psi(A)$ was added, but this set is not hit by
$\psi(A)$.  So it contradicts that $\psi(A)$ is a hitting set.

For (2), let $H\subseteq U$ be a set of size $\lceil\frac{p'}{2}
\rceil g$ and assume that there is no assignment $A\subseteq V$ such
that $\psi(A)=H$.  We show that $H$ is not a hitting set of
$\mathcal{F}_{\varphi,p}$.  For the sake of contradiction, suppose
that $H$ is a hitting set.  Then, as in the proof of the reverse
direction of (1), we obtain $|H \cap U_i| = \lceil\frac{p'}{2}\rceil$
for every $i$.  Since it hits all sets added in Step 6, we also know
that $\psi^{-1}_i(\{H\cap U_i\}) \neq \emptyset$ for every $i$.
However, this contradicts the non-existence of $A\subseteq V$ such
that $\psi(A)=H$.
%
\end{proof}

\begin{theorem}\label{thm:sattohit}
  For every non-decreasing function $c=c(k)$, there exists a 
  non-decreasing function $c'=c'(k')$ such
  that
  \begin{align*}
    \lim_{k \to \infty} \sigma(\sparsecnfsatargs{$k$}{$c$}/n)
    &\le
    \lim_{k' \to \infty} \sigma(\sparsehittingsetargs{$k'$}{$c'$}/n)
    \,, \text{ and}\\
    \lim_{k \to \infty} \sigma(\psparsecnfsatargs{$k$}{$c$}/n)
    &\le
    \lim_{k' \to \infty} \sigma(\psparsehittingsetargs{$k'$}{$c'$}/n)
    \,.
  \end{align*}
\end{theorem}
\begin{proof}
We prove that, for any positive integer $k$ and for any positive odd 
integer $p \ge 3$, there exist positive integers $k'=k'(p):=p' k$ and 
$c'=c'(k'):= 2^{k'+1} c(k')$ such that
$$\sigma(\sparsecnfsatargs{$k$}{$c$}/n) \leq \sigma(\sparsehittingsetargs{$k'$}{$c'$}/n) + \OH\left(\frac{\log p}{p}\right)\,.$$
As $p\to\infty$, the right-hand side tends to the right-hand side of 
the inequality that we want to prove, and since the inequality holds 
for all $k$, it also holds as $k\to\infty$.

To prove the claim, we let $\varphi$ be a $k$-CNF formula of density 
at most $c(k)$, and we create the set system $\mathcal{F}_{\varphi,p}$ 
as described above together with the desired hitting set size 
$t=\lceil \frac{p'}{2} \rceil \frac{n}{p}$, and we recall that 
$p'=p+2\lceil \log_2 p\rceil$.
For any constant $p$, this can clearly be done in polynomial time.
By Lemma~\ref{lem:sattohit}, this is a reduction from \cnfsat to 
\hittingset, and the reduction is parsimonious, that is, the number of 
hitting sets is exactly equal to the number of satisfying assignments.
It remains to check that the set system uses at most $c'n'$ sets, each 
of size at most~$k'$, and that the inequality above holds.

It is easy to see that any set in $\mathcal{F}_{\varphi,p}$ has size 
at most $k'$.
Let $m'$ be the number of sets in $\mathcal{F}_{\varphi,p}$.
We observe that there are at most $2^{p'}n/p$ sets added in Step~5 and 
Step~6.
Moreover, since each clause contains variables from at most $k$
blocks, there are at most $2^{p'k}m$ sets added in Step~7.
Therefore $m'/n' \le m'/n \le 2^{p'}+2^{kp'}c(k) \leq c'(k')$ holds, 
where we use the monotonicity of $c$.
This means that we can determine whether $\varphi$ is satisfiable in 
time
$2^{\sigma(\sparsehittingsetargs{$k'$}{$c'$}/n)n'} \cdot\poly(n)$,
where $n'$ is the size of the universe of~$\mathcal{F}_{\varphi,p}$.
Since $n' = \frac{n}{p} (p + 2 \lceil \log p \rceil)=n(1 + 
\OH(\frac{\log p}{p}))$ and $\sigma\leq 1$, the claim follows.
\end{proof}

We remark that the proof also works when there is no restriction on 
the density and even when there is no restriction on the clause/set 
size.
This is because the running time of the reduction is polynomial time 
for every constant $p$.
Furthermore, the theorem trivially holds for the counting versions of 
the problems as well.

\subsection{From Hitting Set via Set Splitting to CNF-SAT}

\begin{theorem}\label{thm:hit-to-setsplitting}
  \begin{align*}
    \lim_{k \to \infty} \sigma(\hittingsetargs{$k$}/n) &\le \lim_{k 
      \to \infty} \sigma(\setsplittingarg{$k$}/n)\,\text{, and}\\
    \lim_{k \to \infty} \sigma(\phittingsetargs{$k$}/n) &\le \lim_{k 
      \to \infty} \sigma(\psetsplittingarg{$k$}/n)\,.
  \end{align*}
\end{theorem}

\begin{proof}
  It is enough to show that, for all positive integers $k$ and $p$, we 
  have
  $$\sigma(\hittingsetargs{$k$}/n) \le 
  \sigma(\setsplittingarg{$k'$}/n)+\frac{\log_2(p+1)}{p}\,,$$
  where $k'=\max(k+1,p+1)$.
  Let $(\mathcal{F},t)$ be an instance of \hittingsetargs{$k$}.
  We can assume that the universe $U$ of $\mathcal F$ has $n$ elements
  and that~$p$ divides $n$.
  Let $U=U_1 \mathbin{\dot\cup} \dots \mathbin{\dot\cup} U_{n/p}$ be a partition in which
  each part has exactly $|U_i|=p$ elements of the universe~$U$.
  Let $t_1,\dots,t_{n/p}$ be nonnegative integers such that
  $\sum_{i=1}^{n/p} t_i = t$.
  The $t_i$'s are our current guess for how many elements of a
  $t$-element hitting set will intersect with the $U_i$'s.
  The number of ways to write $t$ as the ordered sum of $n/p$
  nonnegative integers $t_1,\ldots,t_{n/p}$ with $0\leq t_i \leq p$
  can be bounded by $(p+1)^{n/p}=2^{n \cdot \log (p+1)/p}$.
  For each choice of the $t_i$'s, we construct an instance
  $\mathcal{F'}$ of \setsplittingarg{$k'$} as follows.
  \begin{enumerate}
    \item Let $R$ (red) and $B$ (blue) be two special elements and
      add the set $\{R,B\}$ to $\mathcal{F'}$.
    \item For all $i$ with $t_i<p$ and
      for all $X \in \binom{U_i}{t_i+1}$,
      add $X\cup\{R\}$ to $\mathcal{F'}$.
    \item For every $Y \in \mathcal{F}$,
      add $Y\cup\{B\}$ to $\mathcal{F'}$.
  \end{enumerate}
  Clearly $\mathcal F'$ can be computed in polynomial time and its
  universe has $n+2$ elements.
  The sets added in step~2 have size at most $p+1$ and the sets added
  in step~3 have size at most $k+1$.
  Given an algorithm for \setsplitting, we compute $\mathcal{F'}$ for
  every choice of the $t_i$'s and we decide \hittingset in time
  $2^{(\epsilon+\sigma(\setsplittingarg{$k'$}))\cdot n}\cdot 
  \poly(m)$, where $\epsilon=\log(p+1) /p$.
  It remains to show the correctness of the reduction, i.e., that 
  $\mathcal{F}$ has a hitting set of size at most~$t$ if and only if 
  $\mathcal{F'}$ has a set splitting for some choice of 
  $t_1,\dots,t_{n/p}$.

  For the completeness of the reduction, let $H$ be a hitting set of
  size $t$ and set $t_i=|U_i\cap H|$ for all~$i$.
  We now observe that $H\cup\{R\}$ and its complement $(U-H)\cup\{B\}$ form a set
  splitting of $\mathcal F'$.
  The set $\{R,B\}$ added in step~1 is split.
  The sets $X\cup\{R\}$ added in step~2 are split since at least one
  of the $t_i+1$ elements of $X\subseteq U_i$ is not contained in $H$.
  Finally, the sets $Y\cup\{B\}$ added in step~3 are split since each
  $Y\in\mathcal{F}$ has a non-empty intersection with $H$.

  For the soundness of the reduction, let $(S,\overline S)$ be a set
  splitting of $\mathcal F'$ for some choice of $t_1,\ldots,t_{n/p}$.
  Without loss of generality, assume that $R$ is the first vertex and 
  thus, because of the way we defined \setsplitting, we will have 
  $R\in S$.
  By the set added in step~1, this means that $B\in\overline{S}$.
  The sets added in step~2 guarantee that $U_i\cap S$ contains at
  most~$t_i$ elements for all~$i$.
  Finally, the sets added in step~3 make sure that each set
  $Y\in\mathcal{F}$ has a non-empty intersection with~$S$.
  Thus, $S\setminus \{R\}$ is a hitting set of $\mathcal{F}$ and has size at
  most $\sum_i t_i=t$.

  The claim for the parity versions follows as well since the 
  reduction preserves the number of solutions exactly.
\end{proof}

\begin{observation}\label{obs:setsplitting-to-naesat-to-cnfsat}
For any positive integer $k$ we have
\begin{align*}
  \sigma(\setsplittingarg{$k$}/n)
  &\leq \sigma(\naesatargs{$k$}/n)
  \leq \sigma(\cnfsatargs{$k$}/n)\,\text{, and}\\
  \sigma(\psetsplittingarg{$k$}/n)
  &\leq \sigma(\pnaesatargs{$k$}/n)
  \leq \sigma(\pcnfsatargs{$k$}/n)\,.
\end{align*}
\end{observation}
\begin{proof}
  For the first reduction, let $\mathcal F$ be an instance of
  \setsplittingarg{$k$}.
  We construct an equivalent $k$-CNF formula $\varphi$ as follows.
  For each element in the universe of $\mathcal F$, we add a variable,
  and for each set $X\in\mathcal F$ we add a clause in which each
  variable occurs positively.
  A characteristic function of a set splitting $U=U_1 \mathbin{\dot\cup} U_2$ is one that assigns
  $1$ to the elements in $U_1$ and
  $0$ to the elements of $U_2$.
  Observe that the characteristic functions of set splittings of~$\mathcal F$
  stand in one-to-one correspondence to variable assignments that
  satisfy the \naesat constraints of~$\varphi$.
  Thus, any algorithm for \naesatargs{$k$} works for \setsplittingarg{$k$}, too.

  For the second reduction, let $\varphi$ be a \naesatargs{$k$}-formula.
  The standard reduction to \cnfsatargs{$k$} creates two copies of every clause
  of $\varphi$ and flips the sign of all literals in the second
  copies.
  Then any \naesat-assignment of $\varphi$ satisfies both copies of
  the clauses of $\varphi'$.
  On the other hand, any satisfying assignment of $\varphi'$ sets a
  literal to true and a literal to false in each clause of $\varphi$.
  To make the satisfying assignments of $\varphi'$ exactly the same as 
  the NAE-assignments of $\varphi$, we furthermore add a single clause 
  that forces the first variable of $x$ to be set to true (recall that 
  this requirement was part of our definition of \naesat).
  Thus, any algorithm for \cnfsatargs{$k$} works for \naesatargs{$k$}, 
  too.
\end{proof}

\subsection{From Parity CNF-SAT to Parity All Hitting 
  Sets}\label{subsec:pcnftophs}

Given a CNF formula $\varphi$ over $n$ variables and clauses of size 
at most $k$ and an odd integer $p \ge 3$ that divides $n$, we first 
construct the set system $\mathcal{F}_{\varphi,p} \subseteq 2^{U}$ as 
described in Section~\ref{subsection:cnfsattohset}.
Given the set system $\mathcal{F}_{\varphi,p} \subseteq 2^{U}$, we 
create the set system $\mathcal{F}'_{\varphi,p}$ as follows:

\begin{enumerate}
	\setcounter{enumi}{7}
	\item For every block $U_i$:
		\begin{itemize}
			\item add a special element $e_i$ to the universe,
      \item for every $X \in \binom{U_i}{\lfloor p'/2 \rfloor}$, add the set $X \cup \{ e_i \}$ to the set family.
		\end{itemize}
\end{enumerate}

\begin{lemma}
\label{lem:psattophit}
The number of hitting sets of size $t=\lceil p'/2 \rceil\frac{n}{p}$ 
in $\mathcal{F}_{\varphi,p}$ is odd
if and only if
the number of \emph{all} hitting sets in $\mathcal{F}'_{\varphi,p}$ is 
odd.
\end{lemma}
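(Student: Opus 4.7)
The plan is to classify hitting sets of $\mathcal{F}'_{\varphi,p}$ according to which new elements $e_i$ they contain, derive a summation identity relating them to hitting sets of $\mathcal{F}_{\varphi,p}$, and then reduce modulo $2$. First, since $\mathcal{F}_{\varphi,p}\subseteq \mathcal{F}'_{\varphi,p}$ and none of the new elements $e_i$ appear in any set of $\mathcal{F}_{\varphi,p}$, for any hitting set $H'$ of $\mathcal{F}'_{\varphi,p}$ the restriction $H=H'\cap U$ is a hitting set of $\mathcal{F}_{\varphi,p}$. Recall from Step~5 of the construction of $\mathcal{F}_{\varphi,p}$ that every $\lceil p'/2\rceil$-subset of a block $U_i$ lies in $\mathcal{F}_{\varphi,p}$, so any such $H$ automatically satisfies $|U_i\cap H|\geq\lceil p'/2\rceil$ for every $i$.

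Next I would analyze block by block the effect of the Step~8 sets $X\cup\{e_i\}$ with $X\in\binom{U_i}{\lfloor p'/2\rfloor}$. If $|U_i\cap H'|=\lceil p'/2\rceil$, then since $p'$ is odd we have $|U_i\setminus H'|=\lfloor p'/2\rfloor$, so the set $X=U_i\setminus H'$ appears in Step~8 and forces $e_i\in H'$. If instead $|U_i\cap H'|\geq\lceil p'/2\rceil+1$, then $|U_i\setminus H'|<\lfloor p'/2\rfloor$, so every Step~8 set $X\cup\{e_i\}$ is already hit by $H'\cap U_i$, leaving the membership of $e_i$ in $H'$ unconstrained. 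Hence the hitting sets of $\mathcal{F}'_{\varphi,p}$ extending a given hitting set $H$ of $\mathcal{F}_{\varphi,p}$ correspond bijectively to subsets of $\{i:|U_i\cap H|>\lceil p'/2\rceil\}$.

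Writing $s(H)=\bigl|\{i:|U_i\cap H|>\lceil p'/2\rceil\}\bigr|$, this yields
\[
  \#\{\text{hitting sets of }\mathcal{F}'_{\varphi,p}\}
  \;=\;\sum_{H}2^{s(H)},
\]
where $H$ ranges over hitting sets of $\mathcal{F}_{\varphi,p}$. Reducing modulo $2$, only terms with $s(H)=0$ survive, i.e., hitting sets $H$ with $|U_i\cap H|=\lceil p'/2\rceil$ in every one of the $n/p$ disjoint blocks, which are precisely the hitting sets of total size $\lceil p'/2\rceil\cdot n/p$; this delivers the claimed parity equivalence. The one delicate point is the forcing direction of the block-wise case analysis, which hinges on $p'$ being odd so that the complement $U_i\setminus H'$ of a tight block has exactly $\lfloor p'/2\rfloor$ elements and therefore appears verbatim as some $X$ in Step~8; the rest is routine bookkeeping.
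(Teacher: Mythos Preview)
Your proof is correct and rests on the same structural observation as the paper's: when $|H'\cap U_i|>\lceil p'/2\rceil$, the element $e_i$ is unconstrained, so such hitting sets come in pairs. The paper phrases this as an explicit fixed-point-free involution $H'\mapsto H'\mathbin{\Delta}\{e_i\}$ (with $i$ the least ``large'' block), whereas you package it as the identity $\sum_H 2^{s(H)}$ and reduce modulo~$2$; these are standard equivalent formulations of the same parity argument.
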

\begin{proof}
Let $g=\frac{n}{p}$. We first prove that the number of hitting sets of $\mathcal{F}_{\varphi,p}$ of size $\lceil p'/2 \rceil g$ is equal to the number of hitting sets $H'$ of $\mathcal{F}'_{\varphi,p}$ such that $|H' \cap U_i| = \lceil \frac{p'}{2} \rceil$ for every $1 \leq i \leq g$. Suppose that $H$ is a hitting set of $\mathcal{F}_{\varphi,p}$ of size $\lceil p'/2 \rceil g$, then it is easy to see that $H \cup \{e_1,\ldots,e_g\}$ is a hitting set of $\mathcal{F}'_{\varphi,p}$ since all the sets added in Step 8 are hit by some $e_i$, and indeed  $|H' \cap U_i| = \lceil \frac{p'}{2} \rceil$ for every $1 \leq i \leq g$ since otherwise the set $U_i \setminus H'$ added in Step 5 is not hit by $H'$.
For the reverse direction, suppose $H'$ is a hitting set of $\mathcal{F}'_{\varphi,p}$ such that $|H' \cap U_i| = \lceil \frac{p'}{2} \rceil$ for every $1 \leq i \leq g$.
Then $\{e_1,\ldots,e_g\} \subseteq H'$ since all the sets added in Step 8 are hit by $H'$. And hence we have a bijection between the two families of hitting sets.

For every hitting set $H'$ of $\mathcal{F}'_{\varphi,p}$ and block
$U_i$, we know that $|H' \cap U_i| \geq \lceil p'/2 \rceil$. So it
remains to show that the number of hitting sets $H'$ of
$\mathcal{F}'_{\varphi,p}$ such that there is an $1 \leq i \leq g$
with $|H' \cap U_i| > \lceil \frac{p'}{2}\rceil$ is even. Given such a
hitting set $H'$, let $\gamma(H')=H' \Delta \{e_i\}$ where $i$ is the
smallest integer such that $|H' \cap U_i| > \lceil
\frac{p'}{2}\rceil$. Obviously $\gamma$ is its own inverse and
$|\gamma(H') \cap U_i| > \lceil \frac{p'}{2}\rceil$ so now it remains
to show that $\gamma(H')$ is also a hitting set of
$\mathcal{F}'_{\varphi,p}$. To see this, notice that all sets $X \cup
\{ e_i \}$ added in Step 8 where $X \in \binom{U_i}{\lfloor p' /2 \rfloor}$ are hit since $|\gamma(H') \cap U_i| > \lceil \frac{p'}{2}\rceil$ and that those are the only sets containing $e_i$.
\end{proof}

\begin{theorem}\label{thm:psattopallhit}
  For every non-decreasing function $c=c(k)$, there exists a 
  non-decreasing function $c'=c'(k')$ such that
  \begin{align*}
    \lim_{k \to \infty} \sigma(\psparsecnfsatargs{$k$}{$c$}/n)
    &\le
    \lim_{k' \to \infty} 
    \sigma(\pallsparsehittingsetargs{$k'$}{$c'$}/n)
    \,.
  \end{align*}
\end{theorem}
\begin{proof}
  Let $\varphi$ be an instance of \psparsecnfsatargs{$k$}{$c$}.
  First recall from the proof of Theorem~\ref{thm:sattohit} that the 
  reduction
  \begin{align*}
    \sigma(\psparsecnfsatargs{$k$}{$c$}/n)
    &\le
    \sigma(\psparsehittingsetargs{$k'$}{$c'$}/n)
    + O\left(\frac{\log p}{p}\right)
  \end{align*}
  worked by constructing the set system $\mathcal F_{\varphi,p}$, and 
  that the reduction was parsimonious.
  Thus, when we now further move to $\mathcal F'_{\varphi,p}$, we have 
  that the parity of the number of all hitting sets in $\mathcal F_{\varphi,p}'$ 
  is equal to the parity of the number of hitting sets of size at 
  most~$t$ in $\mathcal F_{\varphi,p}$ (by 
  Lemma~\ref{lem:psattophit}), which in turn is equal to the parity of 
  the number of satisfying assignments to $\varphi$.
  Thus, this is a valid reduction from \pcnfsat to \pallhittingset; 
  since the maximum edge size $k'$ does not increase, we just have to 
  verify that the instance remains sparse and does not have too many 
  more vertices.

  For the density, note that, in Step 8, we add at most $2^{p'}n/p$ 
  sets, so the density $c'$ of $\mathcal F_{\varphi,p}$ goes up by at 
  most an additive term of $2^{p'}/p$, which can be easily bounded by 
  a function just of~$k'$.
  For the running time, note that the number $n'$ of vertices in 
  $\mathcal F_{\varphi,p}$ goes up by exactly $n/p'$, that is, the new 
  number $n''$ of vertices can be bounded by $n''\le (1+1/p') n'$.
  As $p\to\infty$, this will approach $n'' \leq n'$.
  The claim follows because we can determine the parity of the number 
  of hitting sets of size at most~$t$ in the set system 
  $\mathcal{F}_{\varphi,p}$ by running the best algorithm for the 
  corresponding problem \pallhittingset, which runs in time
  $2^{\sigma(\pallsparsehittingsetargs{$k'$}{$c''$}/n)n''} \cdot 
  \poly(m)$.
\end{proof}

Note that conversely, an improved algorithm for \pcnfsat gives an 
improved algorithm for \pallhittingset.
This is because instances of \pallhittingset can be viewed in a 
natural way a monotone CNF formulas:
given a set family $\mathcal{F} \subseteq U$ we simply associate a 
variable with every element of $U$ and a monotone clause for every
set $S \in \mathcal{F}$.
\begin{observation}\label{observation:monotone CNF Sat}
  For all positive integers $k$ and $c$, we have
  \[
    \sigma(\pallsparsehittingsetargs{$k$}{$c$}/n)
    \leq
    \sigma(\psparsecnfsatargs{$k$}{$c$}/n)
  \]
\end{observation}


\subsection{Satisfiability for Series-Parallel Circuits}
In this subsection, we show that the satisfiability of $cn$-size {\em series-parallel} circuits can be decided in time time $2^{\delta n}$ for $\delta <1$ independent of $c$ if and only if $\seth$ is not true.
Here the size of a circuit is the number of wires.
Our proof is based on a result of Valiant regarding paths in sparse graphs \cite{Valiant_1977_mfcs}. Calabro \cite{Calabro_2008_eccc} discusses various notions of series-parallel graphs and provides a more complete proof of Valiant's lower bound on the size of series-parallel graphs (which he calls Valiant series-parallel graphs) that have ``many'' long paths.
We remark that the class of Valiant series-parallel graphs is not the same as the notion of series-parallel graphs 
used most commonly in graph theory (see \cite{Calabro_2008_eccc}).

In this section a {\em multidag} $G=(V,E)$ is a directed acyclic multigraph. Let $\inp(G)$ denote the set of vertices $v \in V$ such that the indegree of $v$ in $G$ is zero. Similarly, let $\out(G)$ denote the set of vertices $v \in V$ such that the outdegree of $v$ in $G$ is zero. A {\em labeling} of $G$ is a function $l\colon V \rightarrow \nats$ such that $\forall (u,v)\in E$, $l(u)<l(v)$. A labeling $l$ is {\em normal} if for all $v \in \inp(G)$, $l(v)=0$ and there exists an integer $d\in \nats$ such that for all $v \in \out(G) \setminus \inp(G)$,  $l(v)=d$. A multidag $G$ is \emph{Valiant series-parallel} ($\vsp$) if it has a normal labeling $l$ such that
there exist no $(u,v), (u',v')\in E$ such that $l(u)<l(u')<l(v)<l(v')$.

We say that a boolean circuit $C$ is a \emph{$\vsp$ circuit} if the underlying multidag of $C$ is a $\vsp$ graph and the indegree of every node is at most two
(namely, the fan-in of each gate is at most two).
Using the depth-reduction result by Valiant~\cite{Valiant_1977_mfcs}
and following the arguments by Calabro~\cite{Calabro_2008_eccc} and
Viola~\cite{Viola_2009_FTTCS},
we may show the following.

\begin{theorem}
\label{thm:depth3}
Let $C$ be a $\vsp$ circuit of  size $cn$ with $n$ input variables.
There is an algorithm $A$ which on input $C$ and a parameter $d\geq 1$
outputs an equivalent depth-3
unbounded fan-in OR-AND-OR circuit $C'$ with the following properties.
\begin{enumerate}
\item Fan-in of the top OR gate in $C'$ is bounded by $2^{n/d}$.
\item Fan-in of the bottom OR gates is bounded by $2^{2^{\mu c d}}$ where
$\mu$ is an absolute constant.
\item 
$A$ runs in time $O(2^{n/d} n^{O(1)})$ if $c$ and $d$ are constant.
\end{enumerate}
\end{theorem}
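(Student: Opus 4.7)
The plan is to combine Valiant's depth-reduction lemma for VSP multidags with a top-level enumeration of values on the cut wires, and then represent each local consistency check as a small DNF.

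First, I would invoke Valiant's graph-theoretic result on VSP multidags: for any parameter $k$, one can efficiently find a set $S$ of wires with $|S| = O(|E(C)|/\log k) = O(cn/\log k)$ such that, in the residual graph $C \setminus S$, every gate computes a function whose support lies in a set of at most $k$ wires drawn from the input variables and from~$S$. Following Calabro's clarification of Valiant's argument, this support bound is the correct quantitative form for VSP graphs (rather than the weaker longest-path bound). Choosing $k = 2^{\mu c d}$ for a suitable absolute constant $\mu$ then forces $|S| \le n/d$, which is exactly the budget that (1) requires.

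Next, I would build $C'$ with the top OR of fan-in $2^{|S|} \le 2^{n/d}$, one branch per guess $\beta \in \{0,1\}^{S}$ of the values carried by the wires in~$S$. Under the branch for $\beta$, I need an AND of $|S|+1$ local checks: for each wire $s \in S$, the value of~$s$ computed by the residual subcircuit (from the input and the $\beta$-values of the ancestors of $s$ in $S$) must equal $\beta_s$; and the output gate of the residual circuit must evaluate to true. Because each such check is a boolean function on at most $k = 2^{\mu c d}$ variables, it has a DNF representation with at most $2^k = 2^{2^{\mu c d}}$ conjunctive terms of width at most~$k$. These DNFs become the bottom OR-of-AND layer, and their fan-in meets condition~(2). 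The AND of all $|S|+1$ checks is the middle layer. Merging the outer OR over $\beta$ with the top OR of the depth-3 verification circuit (formally, the guess OR already \emph{is} the top level) keeps the overall depth at~$3$.

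For running time, Valiant's set $S$ can be extracted in polynomial time; for constant $c, d$ the support bound $k$ and hence each DNF is of constant size, so for each of the $2^{n/d}$ guesses $\beta$ the verification circuit is constructed in polynomial time. Summing over $\beta$ gives the claimed $O(2^{n/d}\, n^{O(1)})$ bound of~(3).

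The main obstacle is getting the right quantitative form of Valiant's lemma: the naive reading in terms of residual \emph{depth} would give support size doubly exponential in the depth and hence a DNF of triply exponential size, blowing up the bottom fan-in beyond $2^{2^{\mu c d}}$. Avoiding this requires the sharper VSP-specific bound from Valiant's original argument (as expounded by Calabro), where the non-crossing property of a normal VSP labeling lets one bound the number of wires entering each residual component directly, not via the depth. A minor bookkeeping point is to ensure that wires in $S$ only feed later wires in $S$ through the residual graph in a well-founded way, so that each local check really is a function of inputs and already-guessed $\beta$-coordinates rather than of other outputs; this is guaranteed by the acyclicity of $C$.
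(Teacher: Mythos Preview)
Your overall strategy---apply Valiant's edge-removal lemma for VSP graphs, enumerate all assignments to the removed wires, and express the residual consistency checks as small formulas---is exactly the argument the paper invokes (it does not spell out a proof but cites Valiant, Calabro, and Viola). There is, however, a genuine structural slip in your construction.

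You write each local check as a \emph{DNF} and then take the AND of these DNFs under each guess~$\beta$. That produces an OR (over~$\beta$) of ANDs (over checks) of ORs (the DNF disjunctions) of ANDs (the DNF terms): a depth-$4$ OR--AND--OR--AND circuit, not the depth-$3$ OR--AND--OR circuit the theorem promises. Your sentence about ``merging the outer OR'' does not collapse this extra level. The fix is to write each local check as a \emph{CNF} instead: a Boolean function on $K$ variables has a CNF with at most $2^K$ clauses, each of width at most~$K$, and the AND of all $|S|+1$ such CNFs is again a CNF (just pool the clauses). Then for each~$\beta$ you obtain a single CNF, and the OR over~$\beta$ gives OR--AND--OR, i.e., a disjunction of $2^{n/d}$ many $K$-CNFs---precisely how the paper paraphrases the theorem in the sentence following its statement.

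A related point: you invoke a ``sharper VSP-specific bound'' that controls the \emph{support} of each residual gate by~$k$ directly, bypassing the $2^{\text{depth}}$ blow-up. The Valiant/Calabro statement only bounds residual \emph{depth}: removing $O(cn/\log_2 D)$ wires leaves depth below~$D$, and with fan-in two this yields support at most~$2^D$, not~$D$. This does not actually hurt you. Choose $D = 2^{\mu c d}$ so that $|S|\le n/d$; then support is at most $2^{2^{\mu c d}}$, and in the CNF encoding the bottom OR gates (the clauses) have fan-in at most $2^{2^{\mu c d}}$, which is exactly condition~(2). So the argument really does go through depth, the doubly-exponential bound in the theorem is the price of that step, and no sharper support lemma is needed (or, as far as the cited sources go, available).
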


In other words, for all $d\geq 1$, Theorem \ref{thm:depth3} reduces the satisfiability of a $cn$-size $\vsp$ circuit to that of the satisfiability of a disjunction of $2^{n/d}$ $k$-CNFs where $k\leq 2^{2^{\mu cd}}$
in time $O(2^{n/d} n^{O(1)})$.
This implies that
\[
\sigma(\vspcircuitsatargs{$c$}/n) \leq \sigma(\cnfsatargs{$2^{2^{\mu c d}}$}/n) + \frac{1}{d}.
\]
Hence, we obtain the following theorem.

\begin{theorem}\label{thm:vspcircuitsat}
\[
\lim_{c\to \infty} \sigma(\vspcircuitsatargs{$c$}/n)
\leq
\lim_{k\to \infty} \sigma(\cnfsatargs{$k$}/n).
\]
\end{theorem}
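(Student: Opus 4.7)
My plan is to derive the theorem directly from the displayed inequality
\[
\sigma(\vspcircuitsat{$c$}/n) \leq \sigma(\kcnfsatarg{$2^{2^{\mu c d}}$}/n) + \frac{1}{d},
\]
which is itself a consequence of Theorem~\ref{thm:depth3}. So the main content is an algorithmic reduction, followed by a careful passage to the limit; there is no real combinatorial obstacle, only a bookkeeping argument.

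For the algorithmic step, I would fix $c$ and an integer $d\geq 1$ and apply Theorem~\ref{thm:depth3} to a given $cn$-size $\vsp$ circuit $C$ to obtain an equivalent depth-3 OR-AND-OR circuit $C'$ in time $O(2^{n/d}n^{O(1)})$. The top OR has fan-in at most $2^{n/d}$, so $C'$ is satisfiable iff at least one of the AND-OR subcircuits (call them $\psi_1,\ldots,\psi_N$ with $N\leq 2^{n/d}$) is satisfiable. Each $\psi_j$ is a CNF formula whose clauses have width at most $k(c,d):=2^{2^{\mu c d}}$, i.e.\ an instance of $k$-\cnfsat on at most $n$ variables. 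Running any $k$-\cnfsat algorithm of the form $2^{(\sigma(\kcnfsatarg{$k$}/n)+\epsilon)n}\poly(n)$ on each $\psi_j$ and OR-ing the outputs decides satisfiability of $C$ in total time
\[
2^{n/d}\cdot 2^{(\sigma(\kcnfsatarg{$k(c,d)$}/n)+\epsilon)n}\cdot n^{O(1)}.
\]
Taking $\epsilon\to 0$, this yields the claimed inequality.

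For the limit step, note that $\sigma(\kcnfsatarg{$k$}/n)$ is monotone non-decreasing in $k$, because every $k$-CNF formula is trivially a $k'$-CNF formula for any $k'\geq k$. Hence the limit $L:=\lim_{k\to\infty}\sigma(\kcnfsatarg{$k$}/n)$ exists and equals the supremum. For any fixed $c$, sending $d\to\infty$ in the displayed inequality makes $k(c,d)\to\infty$ and $1/d\to 0$, so $\sigma(\vspcircuitsat{$c$}/n)\leq L$. Since this bound is independent of $c$, we may take $\lim_{c\to\infty}$ of the left-hand side to conclude
\[
\lim_{c\to\infty}\sigma(\vspcircuitsat{$c$}/n)\leq\lim_{k\to\infty}\sigma(\kcnfsatarg{$k$}/n),
\]
as desired.

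The only delicate point is ensuring that the algorithm of Theorem~\ref{thm:depth3} is genuinely subsumed by the $O(2^{n/d})$ term and does not introduce an extra multiplicative factor that would spoil the bound; since $c$ and $d$ are constants when we run the reduction, the theorem explicitly gives this bound, so nothing more is needed. Everything else is a direct chaining of the fan-in estimates of Theorem~\ref{thm:depth3} with the definition of $\sigma$.
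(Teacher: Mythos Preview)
Your proposal is correct and follows essentially the same approach as the paper: the paper derives the displayed inequality $\sigma(\vspcircuitsat{$c$}/n) \leq \sigma(\kcnfsatarg{$2^{2^{\mu c d}}$}/n) + 1/d$ directly from Theorem~\ref{thm:depth3} and then simply states that the theorem follows, leaving the passage to the limit implicit. Your write-up makes the limit argument explicit (including the monotonicity of $\sigma(\kcnfsatarg{$k$}/n)$ in $k$), which is a welcome clarification but not a different method.
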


For the reverse direction,
observe that a CNF formula with $cn$ clauses, all of size at most $k$, can be written as a $4ck$-size $\vsp$ circuit. This observation implies that
\[
\sigma(\sparsecnfsatargs{$k$}{$c$}/n) \leq \sigma(\vspcircuitsatargs{$4ck$}/n).
\]
Together with the sparsification lemma,
Theorem~\ref{thm:sparsificationlemma}, we obtain the following
theorem.

\begin{theorem}
  $
  \lim_{k\to \infty} \sigma(\cnfsatargs{$k$}/n)
  \leq
  \lim_{c\to \infty} \sigma(\vspcircuitsatargs{$c$}/n)
  \,.
  $
\end{theorem}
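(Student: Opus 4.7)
The plan is to provide the reduction hinted at in the paragraph immediately preceding the theorem, then combine it with the sparsification lemma. Concretely, I would first show $\sigma(\sparsecnfsatargs{$k$}{$c$}/n) \leq \sigma(\vspcircuitsat{$4ck$}/n)$ by an explicit construction, then chain this with Theorem~\ref{thm:sparsificationlemma} to take the limit.

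For the construction, given a $k$-CNF formula with at most $cn$ clauses over $n$ variables, I would build the following circuit. For each clause, form a left-deep tree of fan-in-$2$ OR gates over its (at most $k$) literals; negative literals are produced by a NOT gate sitting directly above the corresponding input. This subcircuit uses at most $k-1$ OR gates, a constant per negation, and $O(k)$ wires. Then AND the outputs of all clauses together via a left-deep tree of fan-in-$2$ AND gates, which uses $cn - 1$ AND gates and $2(cn-1)$ wires. A routine count shows the total number of wires is bounded by $4ck\cdot n$, and every gate has fan-in at most $2$.

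The nontrivial point is that this circuit is Valiant series-parallel. I would prove this by exhibiting an explicit normal labeling with no crossing pair of edges. Place all input vertices at label $0$. For the $j$-th clause, assign its internal OR/NOT gates a contiguous range of labels lying in an interval $I_j$, with the intervals $I_1, I_2, \dots$ disjoint and ordered; within each $I_j$, the labels follow the left-deep OR chain so the clause's own subcircuit is trivially crossing-free. Above all of these, assign the AND-tree gates a strictly increasing sequence of labels ending at a single sink. Because every edge of the circuit is either (i) an input-to-OR edge from label $0$ into some $I_j$, (ii) an internal edge inside a single $I_j$, or (iii) an edge from a clause output or AND gate into the AND-tree, any two edges that could potentially cross in the VSP sense in fact nest or lie in disjoint label windows; checking the four cases rules out $l(u)<l(u')<l(v)<l(v')$. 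Equivalently, the circuit is a series composition of a parallel composition of clause subcircuits with an AND-tree, and both operations preserve VSP; I would cite this inductively rather than verify all four cases by hand.

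Finally, I would combine the two inequalities. For each $k$, let $c = c(k)$ be the sparsification density from Theorem~\ref{thm:sparsificationlemma}, so that $\lim_{k\to\infty}\sigma(\kcnfsatarg{$k$}/n) = \lim_{k\to\infty}\sigma(\sparsecnfsatargs{$k$}{$c(k)$}/n)$. The construction above gives $\sigma(\sparsecnfsatargs{$k$}{$c(k)$}/n) \leq \sigma(\vspcircuitsat{$4c(k)k$}/n) \leq \lim_{c'\to\infty}\sigma(\vspcircuitsat{$c'$}/n)$, since $4c(k)k$ is a particular constant for each fixed $k$ and the sequence $\sigma(\vspcircuitsat{$c'$}/n)$ is nondecreasing in $c'$. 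Taking $k\to\infty$ yields the theorem. The main thing to be careful about is the wire count and the VSP verification; both are mostly bookkeeping, and the only real subtlety is making sure the labels for different clause subcircuits are placed into disjoint intervals so that no pair of edges from distinct clauses forms a forbidden crossing with the AND-tree edges above them.
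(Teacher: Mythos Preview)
Your plan is exactly the paper's: turn a sparse $k$-CNF into a linear-size fan-in-$2$ VSP circuit, giving $\sigma(\sparsecnfsatargs{$k$}{$c$}/n)\le\sigma(\vspcircuitsat{$4ck$}/n)$, and then invoke Theorem~\ref{thm:sparsificationlemma}. The paper does not spell out the VSP verification at all, so on that point you are filling in an omitted detail rather than diverging.

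However, the concrete labeling you propose does not work. With the clause subcircuits in \emph{disjoint} intervals $I_1<I_2<\cdots$ and the AND tree above all of them, take $j<j'$ and look at the edge from an input (label~$0$) into some gate of clause~$j'$ together with the edge from the output of clause~$j$ (the top of $I_j$) into an AND gate (label $L$ above every $I_i$). Their endpoint labels satisfy $0<\text{top}(I_j)<(\text{any label in }I_{j'})<L$, which is precisely the forbidden pattern $l(u)<l(u')<l(v)<l(v')$. So ``placing the clause subcircuits into disjoint intervals'' is the opposite of what is needed; your ``equivalently, parallel composition'' remark is the right instinct, but it is not equivalent to the disjoint-interval labeling you wrote down. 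The easy fix is to put all clause subcircuits at the \emph{same} levels: route every literal through a buffer or NOT gate at label~$1$, place the $i$-th OR of every left-deep clause tree at label~$i{+}1$ so that every clause output sits at a common label~$K\le k$, and lay the AND chain at labels $K{+}1,K{+}2,\ldots$. With all clause edges confined to $[0,K]$ and all AND-tree edges to $[K,K{+}m{-}1]$, the crossing check becomes a two-line case analysis, and the wire count stays below $4ckn$.
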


\section{On Improving Dynamic Programming Based Algorithms}
\label{sec:dp}

In this section we give some reductions that show that several dynamic 
programming based algorithms cannot be improved unless the growth rate 
of \cnfsat can be improved.
In the parity world, our starting point will be the hardness of 
$\pallhittingset/n$ as proved in Theorem~\ref{thm:psattopallhit}. More 
specifically, we show that $\pallhittingset$ and $\pallsetcover$ are 
actually the \emph{same} problem, for which we use a simple but novel 
property of independent sets in bipartite graphs in 
\S\ref{subsec:flip}.
In \S\ref{subsec:smallsc} we show that the current algorithms for 
$\psteinertree/t$ and $\pcvc/t$ are at least as hard to improve as the 
algorithm for $\pallsetcover/n$.
Motivated by these facts, we concoct the hypothesis that the growth 
rate $2$ of the best known algorithm for \setcover can not be 
improved, and we show similar implications for the problems 
$\steinertree/t$ and $\cvc/k$, \setpart and \subsetsum.

\subsection{The flip: Parity Hitting Set equals Parity Set Cover}
\label{subsec:flip}

It is well known that the Hitting Set and the Set Cover problem are 
dual to each other:
The hitting sets of any set family $\mathcal F$ are in one-to-one 
correspondence with the set covers of its dual set family $\mathcal 
F^*$.
Here the dual is defined by flipping the roles of sets and elements: 
in $\mathcal F^*$, every element becomes a set and every set becomes 
an element, but we preserve all incidences between them.
\begin{observation}\label{observation:weak duality}
  For all set families $\mathcal F$, we have
  \[\pallhittingset(\mathcal F) = \pallsetcover(\mathcal F^*)\,.\]
\end{observation}

We demonstrate now that, in the parity world, the duality between 
hitting set and set cover is very strong: Indeed, the two parities are  
equal even without going to the dual set system!  For this, we first 
state the following intermediate step.
\begin{lemma}
  \label{lem:theflip}
  Let $G=(A \cup B,E)$ be a bipartite graph, then the number of 
  independent sets of~$G$ modulo two is equal to
  $
  |\{ X \subseteq A: N(X) = B \}| \bmod 2
  $.
\end{lemma}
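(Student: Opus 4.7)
The plan is to decompose each independent set according to which side of the bipartition its vertices come from, count the contributions exactly, and then reduce modulo~$2$.

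First, I would observe that since $G$ is bipartite, every independent set~$I$ decomposes uniquely as $I = I_A \mathbin{\dot\cup} I_B$ with $I_A \subseteq A$ and $I_B \subseteq B$. The independence condition is equivalent to the single requirement that there be no edge between $I_A$ and $I_B$, which in turn is equivalent to $I_B \subseteq B \setminus N(I_A)$. Thus, if we fix $I_A = X$, then $I_B$ can be chosen as any of the $2^{|B \setminus N(X)|}$ subsets of $B \setminus N(X)$.

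Next, I would sum over the choice of $X$ to get the exact count
\[
  \#\{\text{independent sets of } G\} \;=\; \sum_{X \subseteq A} 2^{|B \setminus N(X)|}.
\]
Reducing this equality modulo~$2$, every term with $|B \setminus N(X)| \geq 1$ vanishes, while every term with $|B \setminus N(X)| = 0$ contributes~$1$. Since $N(X) \subseteq B$ always holds, the condition $|B \setminus N(X)| = 0$ is exactly $N(X) = B$, which yields
\[
  \#\{\text{independent sets of } G\} \;\equiv\; |\{X \subseteq A : N(X) = B\}| \pmod{2},
\]
as desired.

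There is no real obstacle here: the only thing to be slightly careful about is allowing the empty set (both $X = \emptyset$ and the independent set $I = \emptyset$), which is consistent because $N(\emptyset) = \emptyset \neq B$ whenever $B$ is nonempty, and if $B = \emptyset$ the claim is trivially checked directly. So the proof is essentially a one-line parity argument once the decomposition is written down.
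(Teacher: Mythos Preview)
Your proof is correct and follows essentially the same approach as the paper: both group independent sets by their intersection with~$A$, obtain the exact count $\sum_{X\subseteq A} 2^{|B\setminus N(X)|}$, and then observe that modulo~$2$ only the terms with $N(X)=B$ survive. Your write-up is simply a more detailed version of the paper's one-line argument.
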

\begin{proof}
Grouping on their intersection with $A$, the number of independent sets of $G$ is equal to
\begin{align*}
 \sum_{X \subseteq A} 2^{|B \setminus N(X)|}
 &\equiv
 \sum_{\substack{X \subseteq A \\ |B \setminus N(X)|=0}}2^0
 \equiv
 |\{ X \subseteq A: N(X) = B \}| \bmod 2\,.
 &
 \tag*{\qedhere}
 \end{align*}
\end{proof}

This lemma was inspired by a non-modular variant from \cite[Lemma 
2]{DBLP:conf/iwpec/NederlofR10} (see also \cite[Proposition 
9.1]{phdthesisrooij}).
We now show that, for any set system, the parity of the number of 
hitting sets is always equal to the parity of the number of set 
covers.
\begin{theorem}[Flip Theorem]
\label{thm:hstosetcover}
$
\pallhittingset=\pallsetcover\,
$.
\end{theorem}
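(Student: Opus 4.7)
The plan is to apply Lemma~\ref{lem:theflip} twice to the same bipartite graph, once with each orientation, which makes the identity of hitting sets and set covers modulo two completely transparent. In particular, the reduction is the identity on instances: the same set system, read once as a \phittingset{} input and once as a \psetcover{} input, has the same answer.

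Concretely, given any set system $\mathcal{F} \subseteq 2^U$ with $|U| = n$, I would form the bipartite incidence graph $G = (U \cup \mathcal{F}, E)$ where $\{u, S\} \in E$ iff $u \in S$. Observe that a subset $X \subseteq U$ is a hitting set of $(U, \mathcal{F})$ precisely when $N_G(X) = \mathcal{F}$, so applying Lemma~\ref{lem:theflip} with $A := U$ and $B := \mathcal{F}$ yields
\[
  \#\{\text{hitting sets of }(U,\mathcal{F})\}
  \;\equiv\;
  \#\{\text{independent sets of }G\} \pmod 2.
\]
Dually, a subset $Y \subseteq \mathcal{F}$ is a set cover of $(U, \mathcal{F})$ precisely when $N_G(Y) = U$, so applying the lemma with the roles of $A$ and $B$ swapped (to $A := \mathcal{F}$, $B := U$) gives
\[
  \#\{\text{set covers of }(U,\mathcal{F})\}
  \;\equiv\;
  \#\{\text{independent sets of }G\} \pmod 2.
\]
Combining these two congruences shows that the parity of the number of hitting sets equals the parity of the number of set covers of the same system.

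Since the identity map on instances is a reduction in both directions preserving the parameter~$n = |U|$, any $O(2^{\epsilon n}\poly(m))$-time algorithm for one problem immediately solves the other in the same running time, so $\sigma(\phittingset/n) = \sigma(\psetcover/n)$. There is no real obstacle here once Lemma~\ref{lem:theflip} is in hand; the only conceptual point is recognizing that hitting sets and set covers are both ``dominating subsets'' of the bipartite incidence graph, just viewed from the two different sides, and that Lemma~\ref{lem:theflip} is intrinsically symmetric in $A$ and $B$ because independent sets of $G$ do not care which side is called $A$.
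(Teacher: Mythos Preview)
Your proof is correct and is essentially identical to the paper's own argument: both form the bipartite incidence graph of the set system, apply Lemma~\ref{lem:theflip} once with $A=U$ and once with $A=\mathcal{F}$, and conclude via the symmetry of independent sets that the parities coincide. Your write-up is in fact slightly more explicit than the paper's about why the identity reduction preserves the parameter~$n$, but the underlying idea is the same.
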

\begin{proof}
Let $\mathcal{F} \subseteq 2^{U}$ be a set system, let 
$G=(\mathcal{F},U,E)$ be the bipartite graph where $(S,e) \in E$ if 
and only if $e \in S$.  Note that the number of hitting sets of 
$\mathcal{F}$ is equal to $|\{X \subseteq U: N(X) = \mathcal{F}\}|$.  
Then by Lemma \ref{lem:theflip}, the number of hitting sets is equal 
to the number of independent sets of $G$ modulo 2. And similarly, 
since the lemma is symmetric with respect to the two color classes of 
the bipartite graph, the number of set covers of $\mathcal{F}$ is also 
equal to the number of independent sets of $G$ modulo 2.
Thus all three parities are equal.
\end{proof}

Let us emphasize once again that the problem \pallhittingset is equal 
to the problem \pallsetcover.
If, in the following, we use two different names, we do so only 
because the view of one or the other is more convenient for us.

The duality observation and the theorem above give rise to the 
following curious corollary.
\begin{corollary}\label{cor:hs m n}
  $\sigma(\pallhittingset/n) = \sigma(\pallhittingset/m)$
\end{corollary}
That is, \pallhittingset has a $1.99^n \cdot \poly(m+n)$ algorithm if 
and only it has a $1.99^m \cdot \poly(m+n)$ algorithm.
Since hitting sets can be seen as satisfying assignments of a monotone 
CNF formula, we can also formulate an analogue of 
Observation~\ref{observation:monotone CNF Sat}.
\begin{observation}\label{obs:sat m}
  $\sigma(\pallhittingset/m) \leq \sigma(\pcnfsat/m).$
\end{observation}
Putting all things together, we proved that a $1.99^m\cdot\poly(m+n)$ 
algorithm for \pcnfsat implies a $1.99^n\cdot\poly(m+n)$ time 
algorithm for the same problem, and thus such an algorithm would 
violate \seth.

We finish this discussion with one more observation: We can always 
reduce from the problem \pallhittingset to \phittingset and to 
\psetcover.
\begin{observation}\label{obs:allhitting to t-hitting}
  For all size parameters~$s$ of $\pallhittingset$, we have
  \begin{align*}
    \sigma(\pallhittingset/s) &\leq \sigma(\phittingset/s)\,\text{, 
      and}\\
    \sigma(\pallhittingset/s) &\leq \sigma(\psetcover/s)\,.
  \end{align*}
\end{observation}
\begin{proof}
  Note that $\pallhittingset$ is equal to the problem $\phittingset$ 
  in which the size $t$ of the hitting sets we are counting is fixed 
  to $t=n$, i.e., we count all hitting sets.
  Then any algorithm for $\phittingset$ will immediately work for 
  $\pallhittingset$ as well.
  The analogous argument applies to $\psetcover$.
\end{proof}

\subsection{From Set Cover to Steiner Tree and Connected Vertex Cover}
\label{subsec:smallsc}

In this subsection we will give reductions from \setcover/$n$ to
\steinertree/$t$ and \cvc/$k$.
We transfer the reductions to the parity versions \setcover/$n$,
\psteinertree/$t$, and \pcvc/$k$.
For the reduction, we first need an intermediate result, showing that
\setcover/$(n+t)$, that is, \setcover parameterized by the sum of the
size of the universe and solution size, is as hard as \setcover/$n$
(and similarly for \psetcover/$n$ and \psetcover/$(n+t)$).
Once we have this intermediate result, the reductions to the
\psteinertree/$t$ and \pcvc/$k$ problems follow more easily.



\begin{theorem}
\label{thm:setcovera}
$
  \lim_{k \to \infty} \sigma(\setcoverarg{$k$}/n)
  =
  \lim_{k \to \infty} \sigma(\setcoverarg{$k$}/(n+t))
  \,.
$
\end{theorem}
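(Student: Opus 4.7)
The plan is to trade the target parameter $t$ against the set-size parameter $k$ via a ``super-set'' construction that compresses each group of $\ell$ sets of a cover into a single set. Fix a small $\alpha>0$, set $\ell = \lceil 1/\alpha\rceil$, and let $k' = k\ell$. I will show that
\[
  \sigma(\setcoverarg{$k$}/n) \;\le\; (1+\alpha)\,\sigma(\setcoverarg{$k'$}/(n+t)),
\]
from which the stated inequality follows by letting $k\to\infty$ (so $k'\to\infty$) and then $\alpha\to 0$.

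Given an instance $(U,\mathcal{F},t)$ of $\setcoverarg{$k$}$ with $|U|=n$, I first assume $t\le n$ (otherwise the problem reduces to checking $\bigcup\mathcal{F}=U$). I then pad $t$ to a multiple of $\ell$: introduce $p := (\ell - (t \bmod \ell))\bmod \ell < \ell$ fresh dummy elements $d_1,\ldots,d_p$ with singleton sets $\{d_i\}$, obtaining $U^* := U\cup\{d_1,\dots,d_p\}$, $\mathcal{F}^* := \mathcal{F}\cup\{\{d_i\}\}_{i=1}^p$, and $t^* := t+p$. Each~$d_i$ is covered only by $\{d_i\}$, so the padded instance admits a cover of size~$\le t^*$ if and only if the original does one of size~$\le t$, and by construction $\ell \mid t^*$.

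Next I build the $\setcoverarg{$k'$}$ instance $(U^*,\mathcal{F}',t')$ where $t' = t^*/\ell$ and
\[
  \mathcal{F}' = \left\{\,\bigcup_{i=1}^{j} S_i \;:\; 1\le j\le \ell,\ S_1,\dots,S_j\in \mathcal{F}^*\,\right\}.
\]
Each super-set has size at most $k\ell = k'$, and $|\mathcal{F}'|\le |\mathcal{F}^*|^\ell = \poly(m)$ for constant~$\ell$. The padded instance has a cover of size $\le t^*$ iff the super-set instance has one of size $\le t'$: forward, partition a padded cover arbitrarily into $\lceil t^*/\ell\rceil = t'$ groups of size $\le\ell$ and take each group's union; backward, a super-set cover of size $\le t'$ decomposes into at most $t'\ell = t^*$ distinct padded sets. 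Invoking the hypothesized $\setcoverarg{$k'$}/(n+t)$ algorithm on $(U^*,\mathcal{F}',t')$, the combined parameter is
\[
  |U^*| + t' = (n+p) + \frac{t+p}{\ell} \;\le\; n + \ell + \alpha n + 1 \;=\; (1+\alpha)n + O(1),
\]
using $p<\ell$, $t\le n$, and $1/\ell\le\alpha$; hence the running time is $2^{\sigma'(1+\alpha)n}\poly(m)$ with $\sigma' = \sigma(\setcoverarg{$k'$}/(n+t))$, and the stated trade-off is proved.

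The main obstacle is finding the right balance between $\ell$ and $k$: a fixed $\ell$ only shaves $t$ by a constant factor, so if $\ell$ is too small then $t/\ell$ remains a constant fraction of $n$ and the combined parameter $n+t'$ still scales like $cn$ with $c>1$, yielding only the trivial factor-$2$ bound; if $\ell$ is too large then $k' = k\ell$ grows, but this is harmless because we ultimately take the limit $k\to\infty$ on the right-hand side anyway. The choice $\ell = \lceil 1/\alpha\rceil$ guarantees $t/\ell \le \alpha n$, which is exactly what is needed to keep $|U^*| + t'$ within $(1+\alpha)n + O(1)$ and close the argument.
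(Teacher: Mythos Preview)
Your proof is correct and follows essentially the same approach as the paper: both arguments fix a grouping parameter $\ell$ (the paper calls it $q$) with $1/\ell\le\alpha$, pad $t$ to a multiple of $\ell$ with dummy singletons, and replace $\mathcal{F}$ by the family of unions of $\ell$ sets so that the new target $t'=t/\ell\le\alpha n$ makes $n+t'\le(1+\alpha)n+O(1)$. The only cosmetic difference is that you take unions of at most $\ell$ sets while the paper takes unions of exactly $q$ sets; both variants give the same equivalence.
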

\begin{proof}
  The case $\ge$ follows from the basic fact that increasing the size 
  parameter cannot increase the running time relative to the 
  parameter.

  To prove $\le$, we use the ``powering'' technique for \setcover:
  for each constant $\alpha > 0$, we transform an instance $(\cF,U,t)$ 
  of \setcoverarg{$k$}
  into 
  an instance of $\setcoverarg{$k'$}$,
  for some positive integer $k'$,
  where the size  $t'$ of the solution in the resulting $\setcoverarg{$p'$}$ instances
  is at most $\alpha |U|$, without changing the universe size.

Without loss of generality, we assume that $t \le |U|$.
Consider any $\alpha > 0$.
Let $q$ be the smallest positive integer such that $\frac{1}{q} \le \alpha$.
We may assume that $t$ is divisible by $q$, since otherwise we may add
at most $q$ additional elements to the universe $U$ and singleton sets to the family $\cF$.
We form a family $\cF'$ of all unions of
exactly $q$ sets from $\cF$, that is for each of $\binom{|\cF|}{q}$ choices of $q$
sets $S_1,\ldots,S_q \in \cF$ we add to $\cF'$ the set $\bigcup_{i=1}^q S_i$.
Note that since $q$ is a constant we can create $\cF'$ in polynomial time.
We set $t'=t/q \le |U|/q \le \alpha|U|$.
It is easy to see that $(\cF,U,t)$ is a YES-instance of $\setcoverarg{$k$}$
if and only if $(\cF',U,t')$ is a YES-instance of $\setcoverarg{$qk$}$.
\end{proof}

Observe that in the proof above, because of the grouping of $q$ sets,
one solution for the initial instance may correspond to several
solutions in the resulting instance. For this reason
the counting variant of the above reduction is much more technically involved.

\begin{theorem}
  \label{thm:psetcovera}
  For every function $c=c(k)$, we have
  \begin{align*}
    \lim_{k \to \infty} \sigma(\psparsesetcoverargs{$k$}{$c$}/n)
    &\le
    \lim_{k' \to \infty} \sigma(\psetcoverarg{$k'$}/(n+t))\,.
  \end{align*}
\end{theorem}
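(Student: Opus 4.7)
My plan is to adapt the block-grouping construction of Theorem~\ref{thm:setcovera} to the parity-counting setting. As the authors foreshadow, the naive grouping fails here because a single original set cover corresponds to multiple grouped covers with a multiplicity that is generally not $1$ modulo~$2$. The workaround I propose is to keep, from each block of $q$ sets, only those block-unions whose preimage fiber has odd cardinality, and to enforce the ``exactly one set per block'' discipline by means of selector elements that turn the target into $\oplus_t$-Set Cover with $t$ equal to the number of blocks (which is small by design).

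Concretely, fix $\alpha>0$, set $q:=\lceil 2c/\alpha\rceil$ and $k':=qk+1$. Given an instance $(\mathcal{F},U)$ of $(k,c)$-Sparse-$\oplus$-Set Cover with $|U|=n$ and $m=|\mathcal{F}|\le cn$, partition $\mathcal{F}$ into $r=\lceil m/q\rceil$ blocks $B_1,\ldots,B_r$ of size at most~$q$. For every block $B_j$ and every $Y\subseteq U$, define the fiber size
\[
\mu_j(Y)\ :=\ \bigl|\{X\subseteq B_j : \textstyle\bigcup_{S\in X}S = Y\}\bigr|,
\]
and let $\mathcal{Y}_j:=\{Y\subseteq U : \mu_j(Y)\equiv 1\}$; since $|B_j|\le q=O(1)$, all $\mathcal{Y}_j$ can be enumerated in polynomial time. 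Introduce fresh selector elements $b_1,\ldots,b_r$, set $U':=U\cup\{b_1,\ldots,b_r\}$, and form
\[
\mathcal{F}'\ :=\ \bigl\{Y\cup\{b_j\}\ :\ 1\le j\le r,\ Y\in\mathcal{Y}_j\bigr\}.
\]
Ask the target algorithm to compute the parity of size-$r$ covers of $(\mathcal{F}',U')$; this is an instance of $k'$-$\oplus_t$-Set Cover with $t'=r$.

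The crux of the proof is a parity identity. Writing $\mathcal{C}\subseteq\mathcal{F}$ uniquely as $(X_1,\ldots,X_r)$ with $X_j=\mathcal{C}\cap B_j$ and regrouping by $Y_j:=\bigcup_{S\in X_j}S$, the total number of covers of $(\mathcal{F},U)$ equals
\[
\sum_{Y_1,\ldots,Y_r\subseteq U}\Bigl(\prod_{j=1}^{r}\mu_j(Y_j)\Bigr)\cdot\bigl[Y_1\cup\cdots\cup Y_r = U\bigr].
\]
Modulo~$2$, only tuples with every $\mu_j(Y_j)$ odd survive, i.e., tuples $(Y_j)\in\mathcal{Y}_1\times\cdots\times\mathcal{Y}_r$. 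On the target side, a size-$r$ cover of~$U'$ must cover each~$b_j$, and $b_j$ occurs only in sets of block~$j$; hence exactly one set $Y_j\cup\{b_j\}$ with $Y_j\in\mathcal{Y}_j$ is chosen per block, and the remaining cover condition is $\bigcup_j Y_j=U$. The two quantities therefore agree modulo~$2$.

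For parameters, $|U'|+t'=n+2r\le n+2cn/q+O(1)\le (1+\alpha)n+O(1)$, each set of $\mathcal{F}'$ has size at most $qk+1=k'$, and $|\mathcal{F}'|\le 2^q r$ is polynomial in~$n$ for constant~$q$. Running an algorithm for $k'$-$\oplus_t$-Set Cover parameterized by $n+t$ with exponent $\sigma'$ therefore solves the original instance in time $2^{\sigma'(1+\alpha)n+o(n)}\poly(m)$. Letting $k\to\infty$ (whereby $k'=k'(k,\alpha)\to\infty$ as well) and then $\alpha\to 0^+$ yields the claimed limit inequality. I expect the main obstacle to be pinpointing the correct parity-preserving definition of $\mathcal{Y}_j$ (the block-unions with odd fiber) and recognizing that the selector elements $b_j$ cleanly convert the implicit ``one set per block'' constraint into the $\oplus_t$-Set Cover formulation; once this is in place, the rest is essentially bookkeeping.
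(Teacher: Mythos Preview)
Your proof is correct and takes a genuinely different, and in fact cleaner, route than the paper's. The paper follows the decision-version reduction of Theorem~\ref{thm:setcovera} literally: it forms \emph{all} $\binom{|\mathcal F_0|}{q}$ unions of $q$ sets and then has to undo the resulting over-counting. This forces an induction over solution sizes~$j$, a multiset-to-set cleanup via extra universe elements, and a number-theoretic lemma (Lucas/Kummer, Lemma~\ref{lem:binom}) showing that when $q$ is a power of two the fiber $|\Phi^{-1}(A_0)|$ is odd for top-size covers (Lemma~\ref{lem:oddpartition}); only then can $s_{j_0}$ be recovered from the target parity together with all previously computed $s_{j'}$, $j'<j_0$.

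Your construction sidesteps all of this by \emph{partitioning} $\mathcal F$ into constant-size blocks rather than taking all $q$-subsets. The resulting product structure makes the parity of covers factor as $\prod_j\mu_j(Y_j)$, so discarding block-unions with even $\mu_j$ is exactly the right filter; the selector elements $b_j$ then turn ``one choice per block'' into the $\oplus_t$ constraint with $t=r$ and simultaneously guarantee that $\mathcal F'$ is a genuine set (no multiset issues). You need no induction, no power-of-two restriction on~$q$, and no binomial-parity lemma. The paper's approach has the mild advantage of reusing the decision-version reduction almost verbatim, but your argument is shorter and more transparent; the only point worth making explicit in a write-up is that $\emptyset\in\mathcal Y_j$ always (since $\mu_j(\emptyset)=1$ when sets are nonempty), so each block indeed contributes at least one candidate.
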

The reverse $\sigma(\psparsesetcoverargs{$k$}{$c$}/n) \ge 
\sigma(\psparsesetcoverargs{$k$}{$c$}/(n+t))$ holds trivially for all 
$k$ and $c$.
The proof of Theorem~\ref{thm:psetcovera} is quite involved, and we 
postpone it to the end of this section.
Instead, we will first look at some of its consequences.
\begin{theorem}
  \label{thm:steiner}
  \begin{align*}
    \lim_{k \to \infty} \sigma(\setcoverarg{$k$}/(n+t))
    &\le
    \sigma(\steinertree/t)
    \,, \text{ and}\\
    \lim_{k \to \infty} \sigma(\psetcoverarg{$k$}/(n+t))
    &\le
    \sigma(\psteinertree/t)
    \,.
  \end{align*}
\end{theorem}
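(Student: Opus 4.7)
A single parsimonious reduction handles both inequalities simultaneously. Given a Set Cover instance $(\mathcal{F},U,t)$ with $|U|=n$, I would build the tripartite graph $G$ with vertex set $\{r\}\cup\{v_S:S\in\mathcal{F}\}\cup\{u_e:e\in U\}$, edges $rv_S$ for every $S\in\mathcal{F}$ and $v_Su_e$ whenever $e\in S$, terminal set $T:=\{r\}\cup\{u_e:e\in U\}$, and target Steiner-tree size $t':=n+t+1$. The construction is polynomial in $m:=|\mathcal{F}|+n$ and is oblivious to the set-size bound $k$, since the sets are never grouped or modified.

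The key claim I would prove is that the map $\mathcal{C}\mapsto X_{\mathcal{C}}:=T\cup\{v_S:S\in\mathcal{C}\}$ is a bijection between size-$t$ set covers of $\mathcal{F}$ and size-$t'$ Steiner trees of $(G,T)$. For the ``onto'' direction, any $X\supseteq T$ with $|X|=t'$ satisfies $|X\setminus T|=t$ and $X\setminus T\subseteq\{v_S:S\in\mathcal{F}\}$; setting $\mathcal{C}:=\{S:v_S\in X\}$, connectedness of $G[X]$ forces every $u_e$ to have a neighbour in $X$, which by construction means some $v_S\in X$ with $e\in S$, i.e., $\mathcal{C}$ covers $U$. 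Conversely, for any size-$t$ set cover $\mathcal{C}$ every $v_S\in X_{\mathcal{C}}$ is adjacent to $r$ and every $u_e$ is adjacent to some $v_S\in X_{\mathcal{C}}$, so $G[X_{\mathcal{C}}]$ is connected and $|X_{\mathcal{C}}|=t'$. Because this is a genuine bijection, both the decision version and the $\oplus_t$ parity are preserved exactly.

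For the running-time bookkeeping, if \steinertree/$t$ admits a $2^{(\sigma+\varepsilon)\,t'}\poly(|V(G)|)$-time algorithm, then via the reduction \setcoverarg{$k$}/$(n+t)$ admits a $2^{(\sigma+\varepsilon)(n+t+1)}\poly(m)=2^{(\sigma+\varepsilon)(n+t)}\poly(m)$-time algorithm, absorbing the $+1$ into the polynomial factor. Since the reduction is uniform in $k$, taking $k\to\infty$ on the left yields $\lim_{k\to\infty}\sigma(\setcoverarg{$k$}/(n+t))\le\sigma(\steinertree/t)$; applying exactly the same reduction and bijection to parity counts yields the $\oplus_t$ inequality. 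There is no real obstacle in this proof---the only point to check is that the three colour classes of $G$ and the containment $T\subseteq X$ force $X$ to have the prescribed form, which is immediate from the tripartite structure. In particular, no gadget is needed to suppress multiplicity in the count, because each set cover $\mathcal{C}$ produces a single vertex set $X_{\mathcal{C}}$ rather than a family of trees.
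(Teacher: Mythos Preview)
Your proposal is correct and takes essentially the same approach as the paper: both build the incidence graph of $(\mathcal{F},U)$, attach a universal root to the set-vertices, declare the element-vertices terminals, and read off the bijection between size-$t$ set covers and Steiner vertex sets of the appropriate size. The only cosmetic difference is that the paper forces the root into every solution by hanging a pendant terminal off it, whereas you simply make the root itself a terminal; your variant is slightly cleaner and gives $t'=n+t+1$ directly.
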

\begin{proof}
Given an instance of \setcover consisting of a set system $(\mathcal{F},U)$ and integer $i$, let $G'$ be the graph obtained from the incidence graph of $(\mathcal{F},U)$ by adding a vertex $s$ universal to $\mathcal{F}$ with a pendant vertex $u$, and define the terminal set to be $U \cup \{ u \}$. It is easy to see that the number of Steiner trees with $|U|+i+1$ edges is equal to the number of set covers of $(\mathcal{F},U)$ of size $i$. Hence the theorem follows.
\end{proof}

\begin{theorem}
\label{thm:cvc}
\begin{align*}
\lim_{k \to \infty} \sigma(\setcoverarg{$k$}/(n+t))
& \le \sigma(\cvc/t)\,, \text{ and} \\
\lim_{k \to \infty} \sigma(\psetcoverarg{$k$}/(n+t))
& \le \sigma(\pcvc/t)\,.
\end{align*}
\end{theorem}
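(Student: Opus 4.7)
The plan is to mimic the Steiner Tree reduction of Theorem~\ref{thm:steiner}, adapted to connected vertex cover. Given a \setcoverarg{$k$} instance $(\cF,U,i)$, I would form $G'$ by taking the incidence graph of $(\cF,U)$, adding a universal vertex $s$ adjacent to all set-vertices $s_S$, and attaching pendants: a single pendant $u$ to $s$ and a pendant $v_e$ to each element vertex $u_e$. The target connected vertex cover size is $t'=|U|+1+i$, which is $(n+t)+1$, so we pay only an additive constant in the $(n+t)$ parameterization.

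A pendant attached to a vertex $w\in\{s\}\cup\{u_e:e\in U\}$ forces $w$ into any connected vertex cover $X$ of $G'$, since otherwise the pendant would be an isolated vertex of $G'[X]$. With this forcing, every CVC $X$ of size $|U|+1+j$ decomposes as $X=\{s\}\cup U\cup B\cup[u]\cup\{s_S:S\in\mathcal{C}'\}$ with $[u]\in\{\emptyset,\{u\}\}$, $B\subseteq\{v_e:e\in U\}$, and $\mathcal{C}'\subseteq\cF$, subject to the additional constraint that $\mathcal{C}'$ is a set cover of $U$ (needed so that each forced $u_e$ is connected to $s$ in $G'[X]$ via some $s_S\in X$ with $e\in S$). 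For the decision version this is already enough: a single \cvc query on $G'$ with target $|U|+1+i$ returns YES if and only if \setcover has a solution of size at most $i$, proving the first inequality.

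The main obstacle is the parity version, because the pendants $u$ and $v_e$ may be freely included and thus the bijection is not direct. Writing $a_c$ for the number of size-$c$ set covers and $b_j$ for the number of CVCs of $G'$ of size $|U|+1+j$, summing over the choices of $B$ and $[u]$ and applying Pascal's identity gives
\begin{align*}
  b_j
  &= \sum_{c\ge 0}\Bigl(\tbinom{|U|}{j-c}+\tbinom{|U|}{j-c-1}\Bigr)a_c
  = \sum_{c\ge 0}\tbinom{|U|+1}{j-c}\,a_c\,.
\end{align*}
In generating-function terms this reads $B(x)=(1+x)^{|U|+1}A(x)$; since $(1+x)$ is a unit in $\mathbb{F}_2[[x]]$, the relation is invertible modulo $2$, yielding $a_i\equiv\sum_{j=0}^{i}b_j\,\binom{|U|+i-j}{i-j}\pmod 2$.

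This inversion provides a polynomial-query Turing reduction: compute the parities $b_0,\dots,b_i$ by $i+1\le n+t+1$ calls to $\pcvc$ on the same graph $G'$ with target sizes $|U|+1+j$, and combine them into $a_i\bmod 2$ via the above formula. Each call runs in time $2^{\epsilon(|U|+1+j)}\poly(n')\le 2^{\epsilon(n+t+1)}\poly(n+m)$, and there are only $O(n+t)$ of them, so the total running time is $2^{(\epsilon+o(1))(n+t)}\poly(n+m)$; amplifying the error probability of each randomized call to $1/(3(n+t))$ via $O(\log(n+t))$ repetitions and applying a union bound keeps the overall error below $1/3$. Taking $\epsilon\to\sigma(\pcvc/t)$ and $k\to\infty$ then yields the second inequality.
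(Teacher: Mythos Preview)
Your proof is correct and follows essentially the same approach as the paper: the same pendant construction on the incidence graph, the same counting identity $b_j=\sum_c\binom{|U|+1}{j-c}a_c$, and the same triangular inversion to recover $a_i\bmod 2$ from the $b_j$'s. The only cosmetic difference is that you invert via the generating-function identity $A(x)=(1+x)^{-(|U|+1)}B(x)$ over $\mathbb{F}_2[[x]]$, whereas the paper simply observes that the diagonal coefficient $\binom{|U|+1}{0}=1$ lets one solve for $a_1,\dots,a_i$ by forward substitution.
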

\begin{proof}
Given an instance $(\mathcal{F},U,t)$ of \setcover, we create an
instance of \cvc with $G$ being obtained from the incidence graph of
$(\mathcal{F},U)$ by adding a vertex $s$ adjacent to all vertices
corresponding to sets and adding pendant vertices for every element of
$U \cup \{s\}$.
Moreover let $t'=t+|U|+1$ in the \cvc instance.

It is easy to see that for every $i$, there exists a set cover of
$(\mathcal{F},U)$ of size $i \leq t$ if and only if there exists a
connected vertex cover of $G$ of size at most $i+|U|+1\leq t'$ since
we can take without loss of optimality all vertices having a pendant
vertex, and then connecting these vertices is equivalent to covering
all elements of $U$ with sets in $\mathcal{F}$. Hence, by using an
algorithm for \cvc, we obtain an
$\OH(2^{\sigma(\cvc/t)t'}n^{O(1)})=\OH(2^{\sigma(\cvc/t)(|U|+t)}n^{O(1)})$
time algorithm for \setcoverarg{$p$}.

For the parity case, let us study the number of connected vertex
covers of size $j$ of $G$ for every $j$. Similarly to the previous
case, note that for any connected vertex cover $C$, $C \cap
\mathcal{F}$ must be a set cover of $(\mathcal{F},U)$ by the
connectivity requirement. Hence we group all connected vertex covers
in $G$ depending on which set cover in $(\mathcal{F},U)$ their
intersection with $\mathcal{F}$ is. Let $c_j$ be the number of
connected vertex covers of $G$ of size $j$ and $s_i$ be the number of
set covers of size $i$ in $(\mathcal{F},U)$, then:
\begin{align*}
	c_j
  &=
  \sum_{i=1}^{j-|U|-1} s_i \binom{|U|+1}{j-i-|U|-1}
  \,.
\end{align*}
Now the number $s_i$ modulo $2$ can be determined in polynomial time
once $(c_1,\ldots,c_{i+|U|+1})$ modulo $2$ are computed by recovering
$s_1$ up to $s_i$ in increasing order, since for $i=j-|U|-1$ we have
$\binom{|U|+1}{j-i-|U|-1}=1$.

Thus, if in time $\OH(2^{\sigma(\cvc/t)t'}n^{O(1)})$  we can compute
the number of connected vertex covers of size $n$ modulo $2$, we can
compute the parity of all $(c_1,\ldots,c_{i+|U|+1})$ and hence the
parity of $s_i$ in $\OH(2^{\sigma(\cvc/t)(|U|+t)}n^{O(1)})$.
\end{proof}

\subsection{From Set Cover via Set Partitioning to Subset Sum}

\begin{theorem}\label{thm:setpart}
\[
\lim_{p \to \infty} \sigma(\setcoverarg{$p$}/n) \leq \lim_{p \to \infty} \sigma(\setpartarg{$p$}/n).
\]
\end{theorem}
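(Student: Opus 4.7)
The plan is to show that for every fixed positive integer $p$ we have $\sigma(\setcoverarg{$p$}/n) \le \sigma(\setpartarg{$p$}/n)$; taking $p \to \infty$ on both sides immediately yields the claimed inequality. The tool is a simple ``downward closure'' reduction that preserves both the universe and the maximum set size.

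Given an instance $(\mathcal{F}, U)$ of \setcoverarg{$p$} with $|U|=n$ and $|\mathcal{F}|=m$, I construct an instance $(\mathcal{F}', U)$ of \setpartarg{$p$} over the \emph{same} universe by setting
\[
  \mathcal{F}' \;:=\; \{\, T \subseteq S \,:\, S \in \mathcal{F}\,\}.
\]
Since every set in $\mathcal{F}$ has size at most $p$, we have $|\mathcal{F}'| \le 2^p\, m$, so $\mathcal{F}'$ has polynomial size for constant $p$ and can be written down in polynomial time. Moreover every set of $\mathcal{F}'$ still has size at most $p$, so $(\mathcal{F}', U)$ is a legitimate \setpartarg{$p$} instance, and the number of universe elements is unchanged.

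For correctness, I argue equivalence in both directions. If $\mathcal{C} \subseteq \mathcal{F}$ is a set cover of $U$, assign each element $e \in U$ to an arbitrary fixed $S_e \in \mathcal{C}$ with $e \in S_e$, and define $T_S := \{e \in U : S_e = S\}$ for each $S \in \mathcal{C}$. Then each $T_S$ is a subset of $S$ and therefore lies in $\mathcal{F}'$, and the nonempty $T_S$ partition $U$; this gives a valid partition within $\mathcal{F}'$. Conversely, any partition of $U$ by sets $T_1,\ldots,T_\ell \in \mathcal{F}'$ lifts to a cover of $U$ by choosing, for each $i$, some $S_i \in \mathcal{F}$ with $T_i \subseteq S_i$; the family $\{S_1,\ldots,S_\ell\}$ then covers $U$.

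Hence any $O(2^{\epsilon n}\poly(m'))$-time algorithm for \setpartarg{$p$}, applied to $(\mathcal{F}', U)$, decides \setcoverarg{$p$} on $(\mathcal{F}, U)$ in time $O(2^{\epsilon n}\poly(2^p m)) = O(2^{\epsilon n}\poly(m))$, because $p$ is a constant. This gives $\sigma(\setcoverarg{$p$}/n) \le \sigma(\setpartarg{$p$}/n)$ as desired, and there is no real obstacle; the only things to verify are that the maximum set size and the universe size are preserved by the construction, both of which hold by design.
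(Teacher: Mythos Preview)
Your construction---taking the downward closure $\mathcal{F}'=\{T\subseteq S:S\in\mathcal{F}\}$---is exactly the one the paper uses, so the approach is the same. There is one omission, though: both \setcoverarg{$p$} and \setpartarg{$p$} carry an integer budget~$t$ as part of the input (the questions are ``is there a cover of size $\le t$?'' and ``is there a partition of size $t$?''), and your reduction and correctness argument never mention~$t$. You should pass $t$ through and observe that a cover $\mathcal{C}$ with $|\mathcal{C}|\le t$ yields a partition into at most $t$ nonempty parts (your $T_S$'s), while a partition into $\ell$ parts lifts to a cover of size $\le\ell$; then either iterate over $t'\le t$ or note that $\emptyset\in\mathcal{F}'$ allows padding (with the caveat that $\emptyset$ can be used only once, so iterating is cleanest). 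With that added, your proof matches the paper's.
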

\begin{proof}
Let~$(\mathcal{F},t)$ be an instance of \setcoverarg{$p$}.
Create an instance~$(\mathcal{F'},t)$ of \setpartarg{$p$} by
for every~$S \in \mathcal{F}$ adding all subsets of~$S$ to~$\mathcal{F'}$.  Clearly~$(\mathcal{F'},t)$ has a
set partitioning of size at most~$t$ if and only if~$(\mathcal{F},t)$ has a set cover of size at most~$t$.
Since the size of the sets in~$\mathcal{F}$ is bounded by~$p$, the reduction runs in polynomial time.
\end{proof}

\begin{theorem}\label{thm:subsetsum}
\[
\lim_{k \to \infty} \sigma(\setpartarg{$k$}/n) \leq \sigma(\subsetsum/m).
\]
\end{theorem}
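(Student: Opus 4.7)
The plan is to reduce \setpartarg{$k$}/$n$ to \subsetsum/$m$ via an encoding in which the target has bit length only $n + O(k\log n)$, so that $m/n \to 1$ as $n \to \infty$ for each fixed $k$. Given an instance $(\mathcal{F}, U = [n])$ of \setpartarg{$k$} with $m' := |\mathcal{F}| \le \binom{n}{\le k} \le n^k$, I would assign to each set $S \in \mathcal{F}$ the integer
\[
  a_S \;=\; \sum_{i \in S} 2^{i-1} \;+\; B \cdot |S|,
  \qquad \text{where } B := m' \cdot 2^n,
\]
and set the target $T = (2^n - 1) + B \cdot n$. The idea is that $a_S$ is a two-digit number in base $B$: the low digit carries the incidence vector of $S$, and the high digit carries $|S|$. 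For any subfamily $\mathcal{F}' \subseteq \mathcal{F}$ with coverage multiplicities $c_i$, one has $c_i \le m'$ and therefore $\sum_i c_i 2^{i-1} \le m'(2^n-1) < B$, so the two digits never interfere. Consequently $\sum_{S \in \mathcal{F}'} a_S = T$ if and only if $\sum_i c_i = n$ and $\sum_i c_i 2^{i-1} = 2^n - 1$ hold simultaneously.

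The heart of the proof is then a short arithmetic lemma: if $c_1,\dots,c_n$ are non-negative integers with $\sum c_i = n$ and $\sum c_i 2^{i-1} = 2^n - 1$, then $c_i = 1$ for every $i$. I would set $d_i = c_i - 1 \ge -1$, split $d_i = d_i^+ - d_i^-$ with $d_i^- \in \{0,1\}$, and rewrite the two equations as $\sum d_i^+ = \sum d_i^-$ and $\sum d_i^+ 2^{i-1} = \sum d_i^- 2^{i-1}$. Calling their common value $A$, the right-hand side is a sum of $|N| := \sum d_i^-$ distinct powers of two, so its binary popcount equals $|N|$. A simple ``token collapse'' argument (two chips at position $p$ merge into one at position $p+1$, reducing the chip count) shows that $\operatorname{popcount}(\sum d_i^+ 2^{i-1}) \le \sum d_i^+ = |N|$, with equality iff every $d_i^+ \in \{0,1\}$. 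Since both sides equal $A$, equality must hold, so $A = \sum_{i : d_i^+ = 1} 2^{i-1}$ must agree bit-for-bit with $\sum_{i : d_i^- = 1} 2^{i-1}$; but the support sets $\{i : d_i^+ = 1\}$ and $\{i : d_i^- = 1\}$ are disjoint by construction, so both must be empty.

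For the parameter bookkeeping, the reduction is clearly computable in time polynomial in $n$ for fixed $k$, and $\log_2 T = n + O(k\log n)$. Hence any algorithm that decides \subsetsum in time $2^{\sigma m}\poly$ yields an algorithm for \setpartarg{$k$} running in time $2^{\sigma(n + O(k\log n))}\poly(n) = 2^{\sigma n + o(n)} \poly(n)$, which gives $\sigma(\setpartarg{$k$}/n) \le \sigma(\subsetsum/m)$ for every fixed $k$; the theorem then follows by letting $k \to \infty$. The main obstacle is verifying the arithmetic lemma: the factor $B = m' \cdot 2^n$ in the high digit is precisely tuned so that the digits separate cleanly while keeping $\log_2 T$ inside $n(1+o(1))$; any larger choice (such as the textbook base-$(m'+1)$ encoding with target $(m'+1)^n$) would blow the bit length up by a factor of $k\log n$ and destroy the tightness of the reduction.
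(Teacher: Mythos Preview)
Your two-digit encoding and the arithmetic lemma are both correct, but you have silently dropped the cardinality parameter: in this paper an instance of \setpartarg{$k$} is a triple $(\mathcal{F},U,t)$, and the question is whether there is a partition of size at most~$t$. Your reduction only decides whether \emph{some} partition exists. This is not a technicality one can wave away: the preceding reduction from \setcoverarg{$k$} closes $\mathcal{F}$ downward under subsets, so the resulting family contains every singleton and therefore always admits the trivial size-$n$ partition; the bound $t$ carries all of the content there.

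The repair is immediate and essentially converges to the paper's own proof. Loop over $t_0\le t$ and add a third (most significant) field, with each $a_S$ carrying a $1$ there and the target carrying $t_0$. Once this top field forces $|A|\le t_0$, the padding needed to isolate the lower two fields drops from $\log m'\le k\log n$ to $\log t_0\le\log n$, so the bit length improves from your $n+O(k\log n)$ to $n+O(\log n)$. That is precisely the paper's construction: three fields (cardinality $t_0$, total set size $n$, characteristic vector $1^n$) separated by $\lceil\log_2 t_0\rceil$-bit buffers, one \subsetsum instance per $t_0$. Your chip-collapse proof that $\sum_i c_i=n$ and $\sum_i c_i\,2^{i-1}=2^n-1$ together force every $c_i=1$ is a nice explicit rendering of the implication the paper dispatches in a single sentence.
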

\begin{proof}
Let~$\mathcal{F}\subseteq 2^U$ be an instance of \setpartarg{$k$}.
We iterate over all potential sizes $1 \le t_0 \le n$ of the solution for the \setpart problem.

We create an instance of \subsetsum as follows.
Let the target integer~$t$ for \subsetsum have a bit expansion consisting of three fields.
First, as the most significant bits, a field coding the value of~$t_0$, to check the cardinality of the solution $\mathcal{C} \subseteq \mathcal{F}$;
second, a field of length~$\log_2 t_0+\log_2 n$ containing the value~$n$, to check the total size of all sets in~$\mathcal{C}$;
finally, a field of length~$\log_2 t_0+n$ containing~$n$ ones.  The paddings of length~$\log_2 t_0$ serve to isolate the fields from each other.
For every~$S_i \in \mathcal{F}$, we create an integer~$a_i$ with the same field division as~$t$, where the first field encodes~$1$, the second field encodes~$|S_i|$,
and the third field contains a one in position~$j$ if and only if~$u_j \in S_i$.  We argue that the resulting \subsetsum instance is a YES-instance
if and only if~$\mathcal{F}$ contains a partitioning of~$U$ using exactly~$t_0$ sets.

Clearly, if~$\mathcal{C} \subseteq \mathcal{F}$ partitions~$U$ and~$|\mathcal{C}|=t_0$, then the integers~$a_i$ corresponding to~$S_i \in \mathcal{C}$ sum to~$t$.
The first field sums to~$t_0$ by cardinality of~$\mathcal{C}$, the second sums to~$n$, and in the third field the non-zero digits are simply
partitioned between the~$a_i$.

So let~$A$ be a collection of integers~$a_i$ that sum to~$t$.  By the first field, we have~$|A| \leq t_0$; thus the padding of length~$\log t_0$
is enough to isolate the fields, and we have~$|A|=t_0$.  By the same argument on the second field, the sum over all~$a_i\in A$ of the number of
non-zero bits in the third field is exactly~$n$.  Now, the only way that the third field can actually contain~$n$ true bits is if the
true bits in the third field are partitioned among the~$a_i$.  Thus,~$\mathcal{C}=\{S_i \mid a_i \in A\}$ is a set partitioning of~$U$
of cardinality exactly~$t_0$.

By looping over all~$1 \le t_0 \leq t$ for the \setpart instance, this solves the problem.
Note that the length of the bit string~$t$ is~$n+\OH(\log n)$, which disappears into the asymptotics.
\end{proof}

\subsection{Proof of Theorem~\ref{thm:psetcovera}}



As a proof we present a reduction which
for fixed $\alpha > 0$ transforms an instance $(\cF',U')$ of 
\pallsparsesetcoverargs{$k$}{$c$}
into polynomially many instances of the $\psetcoverarg{$k'$}$ problem,
for some positive integer $k'$,
where the size $t$ of the solution in the resulting 
$\psetcoverarg{$k'$}$ instances
is at most $\alpha |U'|$.

In order to find the parity of the number of all set covers of the instance $(\cF',U')$
we find the parity of the number of set covers of a particular size.
That is we iterate over all possible sizes $j=1,\ldots,|\cF'|$ of a set cover.
Let us assume that we want to find the parity of the number of set covers of size $j$
and for each positive integer $j' < j$ we know the parity of the number of set covers of $(\cF',U')$ of size $j'$.
Let $q$ be the smallest power of two
satisfying $\frac{|\cF'|}{q}+2 \le \alpha|U'|$.
We assume that $\alpha|U'| \ge 3$ since otherwise the instance is small and we can solve it by brute force (recall that $\alpha$ is
a given constant).
Observe that $q$ is upper bounded by a constant independent of $|U'|$ since $|\cF'| \le c|U'|$.

We create a temporary set system $(\cF_0,U_0)$ to ensure that the size of the set covers we are looking for is divisible by $q$.
Let $r = j \bmod q$.
We make $(\cF_0,U_0)$ by taking the set system $(\cF',U')$
and adding $q-r$ new elements to the universe $U_0$ and also $q-r$ singleton sets of the new elements to the family $\cF_0$.
Now we are looking for the parity of the number of set covers of size $j_0=j+(q-r)$ in $(\cF_0,U_0)$.
Observe that for each $j' < j_0$ we know the parity of the number of set covers of size $j'$ in $(\cF_0,U_0)$ since it is equal to the parity of
set covers of $(\cF',U')$ of size $j'-(q-r) < j$
which we already know.

To obtain a \pallsetcoverarg{$k'$} instance we set $U^* = U_0$
and we form a family $\cF^*$ of all unions of
exactly $q$ sets from $\cF_0$, that is for each of $\binom{|\mathcal{F}_0|}{q}$ choices of $q$
sets $S_1,\ldots,S_q \in \cF_0$ we add to $\cF^*$ the set $\bigcup_{i=1}^q S_i$ (note that $\cF^*$ might be a multiset).
Finally we set $t^*=j_0/q$ which is an integer since $j+(q-r)$ is divisible by $q$.
Observe that $t^* \le \frac{j}{q}+1 \le \alpha|U'|-1$,
by the definition of $q$, but $(\cF^*,U^*,t^*)$ might not be a proper instance of \pallsetcoverarg{$kq$},
since $\cF^*$ could be a multiset.
Note that each subset of $U^*$ appears in $\cF^*$ at most $(2^{kq})^{q}=2^{kq^2}$
times, since $\cF_0$ has no duplicates and each set in $\cF^*$ is a union of exactly $q$ sets from $\cF_0$.
To overcome this technical obstacle
we make a new instance $(\cF,U,t)$,
where as $U$ we take $U^*$ with $z=1+kq^2$ elements
added, $U=U^* \cup \{e_1,\ldots,e_z\}$.
We use elements $\{e_1,\ldots,e_{z-1}\}$ to make sets from $\cF^*$ different
in $\cF$ by taking a different subset of $\{e_1,\ldots,e_{z-1}\}$ for duplicates.
Additionally we add one set $\{e_1,\ldots,e_z\}$ to the family $\cF$
and set $t=t^*+1$.
In this way we obtain $(\cF,U,t)$, that is a proper
\pallsetcoverarg{$(kq+z)$} instance and $t = t^*+1 \le \alpha|U'|$.
Observe that in the final instance we have $|U| \le n+q+z$
and $|\cF| \le (cn+q)^q+1$, which is a polynomial since $k,c,q$ and $z$ are
constants.

To summarize the reduction, we have taken an instance of 
\pallsparsesetcoverargs{$k$}{$c$}
and iterated over the size of solution.
Next we made the size divisible by $q$ by adding additional elements to the universe
and created a multiset family $\mathcal{F}^*$ from which we made
a set family by differentiating identical sets with additional elements
of the universe.
Our goal was to decide whether the \pallsetcoverarg{$k$} instance $(\cF',U')$ (for $k'=kq+z$)
has an odd number of set covers,
which means that we want to control the correspondence between the parity
of the number of solutions in each part of the construction.
Observe that the only step of the construction which has
nontrivial correspondence between the number of solutions
of the former and the latter instance is the grouping step
where we transform an instance $(\cF_0,U_0,j_0)$ into a multiset
instance $(\cF^*,U^*,t^*)$.

Hence we assume that we know the parity of the number of set covers
of size $t^*=j_0/q$ in $(\cF^*,U^*)$
as well as the parity of the number of set covers
of size $j'$ for each $j' < j_0$ in $(\cF_0,U_0)$.
Our objective is to compute the parity of the number of set covers
of size $j_0$ in $(\cF_0,U_0)$ in polynomial time
and for this reason we introduce a few definitions and lemmas.
Recall that each set in $\cF^*$ corresponds to a union of exactly $q$ sets in $\cF_0$
and let $\Gamma\colon \cF^* \rightarrow 2^{\cF_0}$ be a function
that for each set in $\cF^*$ assigns a family of exactly $q$ sets
from $\cF_0$ that it was made of.
Moreover let $\cS^* \subseteq 2^{\cF^*}$ be the family of set covers of size $t^*$ in $(\cF^*,U^*)$
and let $\cS_0 \subseteq 2^{\cF_0}$ be the set of set covers of size {\emph{at most}}
$j_0$ in $(\cF_0,U_0)$.
We construct a mapping $\Phi\colon \cS^*\rightarrow \cS_0$ which maps each
set cover $A \in \cS^*$ to a set cover $A_0 \in \cS_0$
such that $A_0$ is exactly the set of sets from $\cF_0$
used in the $t^*$ unions of $q$ sets from $\cF_0$,
that is $\Phi(A)=\bigcup_{X \in A} \Gamma(X)$.
In the following lemma we prove that for a set cover $A_0 \in \cS_0$
the size of $\Phi^{-1}(A_0)$ depends solely on the size of $A_0$.

\begin{lemma}
\label{lem:sizedependent}
Let $A_0,B_0 \in \cS_0$ such that $|A_0| = |B_0|$.
Then $|\Phi^{-1}(A_0)|=|\Phi^{-1}(B_0)|$.
\end{lemma}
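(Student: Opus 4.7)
The plan is to exploit the fact that, by construction, $\Gamma$ is a bijection between $\cF^*$ (viewed as a multiset labelled by the choice of the $q$ sets) and $\binom{\cF_0}{q}$, so that counting preimages of $\Phi$ reduces to a purely combinatorial question about $q$-subsets of the underlying index set.

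First I would unfold the definition: for any $A_0 \in \cS_0$,
\[
   \Phi^{-1}(A_0) \;=\; \Bigl\{\,A \in \binom{\cF^*}{t^*}\,\Big|\, A \text{ is a set cover of } U^*,\; \bigcup_{X\in A}\Gamma(X)=A_0\Bigr\}.
\]
The first observation is that the set cover condition in this definition is automatic: if $X\in\cF^*$ then as a set $X=\bigcup_{S\in\Gamma(X)}S$, so for any $A$ with $\bigcup_{X\in A}\Gamma(X)=A_0$ we have $\bigcup_{X\in A} X=\bigcup_{S\in A_0} S$, which equals $U^*=U_0$ because $A_0\in\cS_0$. Hence $\Phi^{-1}(A_0)$ is exactly the collection of $t^*$-element subsets of $\binom{\cF_0}{q}$ (identified with $\cF^*$ via $\Gamma$) whose union is $A_0$. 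In particular every such subset is contained in $\binom{A_0}{q}$, since $\bigcup_{X\in A}\Gamma(X)\subseteq A_0$ forces $\Gamma(X)\subseteq A_0$ for all $X\in A$.

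Therefore
\[
  |\Phi^{-1}(A_0)| \;=\; \Bigl|\Bigl\{\,\{Q_1,\ldots,Q_{t^*}\}\subseteq \tbinom{A_0}{q}\,\Big|\,\bigcup_{i=1}^{t^*}Q_i=A_0\Bigr\}\Bigr|.
\]
This quantity is invariant under relabelling the elements of $A_0$: any bijection $\pi\colon A_0\to B_0$ induces a bijection $\binom{A_0}{q}\to\binom{B_0}{q}$ and hence a bijection between the corresponding families of $t^*$-subsets whose union is the full ground set. Explicitly, if we let $N(s,q,t^*)$ denote the number of $t^*$-element subsets of $\binom{[s]}{q}$ whose union is $[s]$, then $|\Phi^{-1}(A_0)|=N(|A_0|,q,t^*)$, a value depending only on the cardinality $|A_0|$. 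Since $|A_0|=|B_0|$, the equality $|\Phi^{-1}(A_0)|=|\Phi^{-1}(B_0)|$ follows.

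There is really no obstacle here; the only subtlety to double-check is that the set cover requirement on $A\in\cS^*$ does not break the symmetry between $A_0$ and $B_0$, and the observation in the first paragraph handles exactly this, because $A_0,B_0\in\cS_0$ ensures that every $A$ with $\Phi(A)=A_0$ (resp.\ $B_0$) is automatically a cover of $U^*$.
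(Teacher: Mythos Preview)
Your proof is correct and follows essentially the same approach as the paper: both reduce $|\Phi^{-1}(A_0)|$ to the number of set covers of size $t^*$ in the set system $\bigl(\binom{A_0}{q},A_0\bigr)$, which manifestly depends only on $|A_0|$. You are in fact a bit more explicit than the paper in checking that the cover condition on $A\in\cS^*$ is automatically satisfied once $A_0\in\cS_0$, which the paper leaves implicit.
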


\begin{proof}
Let $A_0 = \{X_1,\ldots,X_a\}$ be a set from $\cS_0$,
where each $X_i \in \cF_0$.
Observe that for any $A \in \cS^*$ we
have $\Phi(A) = A_0$ if and only if $\bigcup_{i=1}^a \Gamma(X_i) = A$.
Consequently $|\Phi^{-1}(A_0)|$ is equal to the number of set covers of size $t^*$ in the set system $(\binom{A_0}{q},A_0)$
and hence $|\Phi^{-1}(A_0)|$ depends only on the size of $A_0$.
\end{proof}

Now we prove that for each set cover $A_0 \in \cS_0$ of size
$j_0$ an odd number of set covers from $\cS^*$ is mapped by $\Phi$ to $A_0$.

\begin{lemma}
\label{lem:binom}
For any nonnegative integers $a,b$ such that $b \le a$
the binomial coefficient $\binom{a}{b}$ is odd if and only if $\ones(b) \subseteq \ones(a)$,
where $\ones(x)$ is the set of indices containing
ones in the binary representation of $x$.
\end{lemma}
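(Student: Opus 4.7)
The plan is to compute $\binom{a}{b} \bmod 2$ by reading off the coefficient of $x^b$ in the polynomial $(1+x)^a$ viewed modulo $2$. The key ingredient is the characteristic-$2$ ``freshman's dream'' identity $(1+x)^{2^k} \equiv 1 + x^{2^k} \pmod 2$, which follows by an easy induction on $k$ starting from $(1+x)^2 = 1 + 2x + x^2 \equiv 1 + x^2 \pmod 2$. Writing the binary expansion $a = \sum_{i \in \ones(a)} 2^i$, this factorizes the generating function as
\[
(1+x)^a \;\equiv\; \prod_{i \in \ones(a)} \bigl(1 + x^{2^i}\bigr) \pmod 2 .
\]

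Next I would expand this product by choosing either $1$ or $x^{2^i}$ from each factor, so that each subset $S \subseteq \ones(a)$ contributes a monomial of degree $\sum_{i \in S} 2^i$. Because binary representations are unique, distinct subsets $S$ produce monomials of distinct degree, so no cancellation occurs modulo $2$. Consequently, the coefficient of $x^b$ in the product is $1$ if there exists a subset $S \subseteq \ones(a)$ with $\sum_{i \in S} 2^i = b$, and $0$ otherwise. Such an $S$ must equal $\ones(b)$ by uniqueness of binary expansion, and it exists precisely when $\ones(b) \subseteq \ones(a)$.

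Finally, matching this against the standard binomial expansion $(1+x)^a = \sum_{b=0}^{a} \binom{a}{b} x^b$ shows that $\binom{a}{b} \equiv 1 \pmod 2$ if and only if $\ones(b) \subseteq \ones(a)$, as claimed. There is no serious obstacle in this plan; the only step that requires any care is the freshman's dream identity, which can alternatively be justified by observing that $\binom{2^k}{j}$ is even for every $0 < j < 2^k$.
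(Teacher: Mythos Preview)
Your proof is correct but takes a genuinely different route from the paper's. You argue via the generating-function identity $(1+x)^a \equiv \prod_{i\in\ones(a)}(1+x^{2^i}) \pmod 2$ and read off the coefficient of $x^b$; this is essentially the $p=2$ case of Lucas' theorem. The paper instead uses Legendre's formula for the $2$-adic valuation of factorials: writing $f(x)=\sum_{i\ge 1}\lfloor x/2^i\rfloor$ for the exponent of $2$ in $x!$, one has $\binom{a}{b}$ odd iff $f(a)=f(b)+f(a-b)$, and after simplification this reduces to $|\ones(a)|=|\ones(b)|+|\ones(a-b)|$, i.e.\ no carries occur when adding $b$ and $a-b$ in binary, which is equivalent to $\ones(b)\subseteq\ones(a)$. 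Your argument is arguably cleaner and more self-contained (no valuation bookkeeping, no carry analysis); the paper's approach, on the other hand, makes the connection to carries explicit, which is sometimes useful when one also wants to know the exact power of $2$ dividing $\binom{a}{b}$ (Kummer's theorem). Either proof is perfectly adequate here.
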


\begin{proof}
For a nonnegative integer $x$ by $f(x)$ let
us denote the greatest integer $i$ such that $x!$ is divisible by $2^i$,
that is
\begin{align*}
 f(x) &=  \sum_{i \ge 1} \big\lfloor \frac{x}{2^i} \big\rfloor \\
      &= (\sum_{i \ge 1} \frac{x}{2^i}) - \frac{1}{2} \cdot |\{ i \ge 1 : \big\lfloor\frac{x}{2^{i-1}}\big\rfloor \textrm{ is odd}\}| \\
      &= (\sum_{i \ge 1} \frac{x}{2^i}) - \frac{|\ones(x)|}{2}
\end{align*}

Since $\binom{a}{b}=\frac{a!}{b!(a-b)!}$ we infer
that $\binom{a}{b}$ is odd if and only if $f(a)=f(b)+f(a-b)$,
which by the above formula is equivalent to $|\ones(a)| = |\ones(b)| + |\ones(a-b)|$.
However for any nonnegative integers $x,y$ we have $\ones(x+y) \le \ones(x)+\ones(y)$
and moreover $\ones(x+y) = \ones(x)+\ones(y)$ if and only if there are no carry-operations
when adding $x$ to $y$, which is equivalent to $\ones(x) \cap \ones(y) = \emptyset$.

Therefore by setting $x=b$ and $y=a-b$ we infer
that $\binom{a}{b}$ is odd if and only if $\ones(b) \cap \ones(a-b) = \emptyset$
which is equivalent to $\ones(b) \subseteq \ones(a)$ and the lemma follows.
\end{proof}

\begin{lemma}
\label{lem:oddpartition}
Let $A_0 \in \cS_0$ such that $|A_0|=j_0$
then $|\Phi^{-1}(A_0)|$ is odd.
\end{lemma}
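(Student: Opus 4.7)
By Lemma~\ref{lem:sizedependent}, $|\Phi^{-1}(A_0)|$ equals the number of size-$t^*$ subfamilies of $\binom{A_0}{q}$ whose union is $A_0$. The first step is to observe that, since $|A_0| = j_0 = q\cdot t^*$ and each chosen set has exactly $q$ elements, a size-$t^*$ subfamily with union $A_0$ must cover every element of $A_0$ exactly once; that is, such subfamilies are in bijection with unordered partitions of $A_0$ into $t^*$ blocks of size $q$. Therefore
\[
  |\Phi^{-1}(A_0)| \;=\; \frac{j_0!}{(q!)^{t^*}\,t^*!}\,.
\]

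The second step is to compute the $2$-adic valuation $\nu_2$ of this quantity using Legendre's formula $\nu_2(n!) = n - s_2(n)$, where $s_2(n)$ denotes the binary digit sum (equivalently $|\ones(n)|$). Since $q$ is by construction a power of two, $s_2(q) = 1$ and hence $\nu_2(q!) = q-1$. Putting the three contributions together gives
\[
  \nu_2\!\left(\frac{j_0!}{(q!)^{t^*}\,t^*!}\right)
  = (j_0 - s_2(j_0)) - t^*(q-1) - (t^* - s_2(t^*))
  = s_2(t^*) - s_2(j_0)\,,
\]
after cancelling $j_0 = q t^*$.

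The final step uses the specific shape of $j_0$: since $q = 2^\ell$ for some $\ell \geq 0$ and $j_0 = 2^\ell\cdot t^*$, the binary representation of $j_0$ is just the binary representation of $t^*$ shifted left by $\ell$ positions, so $s_2(j_0) = s_2(t^*)$. Hence the valuation above is zero, which means $|\Phi^{-1}(A_0)|$ is odd. The main point to get right is the bookkeeping in the Legendre computation, but nothing beyond it is needed; an alternative phrasing in the language of Lemma~\ref{lem:binom} would rewrite $\frac{j_0!}{(q!)^{t^*}\,t^*!}$ as $\frac{1}{t^*!}\prod_{i=1}^{t^*}\binom{iq}{q}$ and apply Kummer's criterion $\ones(q)\subseteq \ones(iq)$ iteratively, but the Legendre route above is cleanest.
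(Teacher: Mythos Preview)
Your proof is correct. Both you and the paper first reduce to counting unordered partitions of a $j_0$-element set into $t^*$ blocks of size $q$, but you then diverge slightly in the parity computation: the paper writes this count as the product $\prod_{i=0}^{t^*-1}\binom{j_0-1-iq}{q-1}$ and applies Lemma~\ref{lem:binom} to each factor (checking that $\ones(q-1)\subseteq\ones(j_0-1-iq)$, which holds because $q=2^\ell$ makes $q-1$ all ones in the low $\ell$ bits and $j_0-1-iq$ is one less than a multiple of $2^\ell$), whereas you write the count as the single multinomial $\frac{j_0!}{(q!)^{t^*}t^*!}$ and invoke Legendre's formula directly. Your route is arguably cleaner, since it bypasses Lemma~\ref{lem:binom} altogether and the cancellation $s_2(j_0)=s_2(t^*)$ is immediate from $j_0=2^\ell t^*$; the paper's route has the advantage of reusing its own lemma, keeping the argument self-contained within the framework already set up.
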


\begin{proof}
Since $|\Phi^{-1}(A_0)|$ is equal to the number of set covers of size $t^*$ in the set system $(\binom{A_0}{q},A_0)$
and $|A_0|=j_0=t^*q$ we infer that $|\Phi^{-1}(A_0)|$ is equal to the number of unordered partitions of $A_0$ into sets of size $q$.
Hence $|\Phi^{-1}(A_0)|=\prod_{i=0}^{t^*-1}\binom{j_0-1-iq}{q-1}$.
Since $j_0$ is divisible by $q$ and $q$ is a power of two
using Lemma~\ref{lem:binom} we have $|\Phi^{-1}(A_0)| \equiv 1\ (\bmod\ 2)$.
\end{proof}

For $j=1,\ldots,j_0$ by $s_j$ let us denote the parity of the number of set covers of $(\cF_0,U_0)$ of size $j$ modulo $2$.
Recall that we know the value of $s_j$ for each $j < j_0$ and we want to compute $s_{j_0}$
knowing also $|\cS^*| \bmod 2$.
By Lemma~\ref{lem:sizedependent} we can define $d_j$ for $j=1,\ldots,j_0$,
that is the value of $|\Phi^{-1}(A_0)| \bmod 2$ for a set $A_0 \in \cS_0$ of size $j$.
By Lemma~\ref{lem:oddpartition} we know that $d_{j_0}$ equals one.
Thus we have the following congruence modulo $2$.
\begin{eqnarray*}
|\cS^*| & = & \sum_{A_0 \in \cS_0} |\Phi^{-1}(A_0)| \equiv
\sum_{j=1}^{j_0} s_jd_j = s_{j_0} + \sum_{j=1}^{j_0-1} s_jd_j\,.
\end{eqnarray*}
Hence knowing $|\cS^*| \bmod 2$ and all values $s_j$ for $j < j_0$ in
order to compute $s_{j_0}$ it is enough to compute all the values $d_j$,
what we can do in polynomial time thanks to the following lemma.

\begin{lemma}
For each $j=1,\ldots,j_0$ we can calculate the value of $d_j$
in polynomial time.
\end{lemma}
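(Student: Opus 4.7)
The plan is to express $d_j$ as an inclusion--exclusion sum of binomial coefficients with only polynomially many terms, and then evaluate each term modulo~$2$ using Lemma~\ref{lem:binom}. From the argument in the proof of Lemma~\ref{lem:sizedependent}, for any $A_0\in\mathcal{S}_0$ with $|A_0|=j$, $|\Phi^{-1}(A_0)|$ equals the number of set covers of size $t^*$ of the set system $\bigl(\binom{A_0}{q},A_0\bigr)$. Identifying $A_0$ with $[j]$, this is exactly the number of $t^*$-element subfamilies $\mathcal{A}\subseteq\binom{[j]}{q}$ satisfying $\bigcup\mathcal{A}=[j]$, and $d_j$ is the parity of that number.

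Next, I will apply M\"obius inversion over the Boolean lattice of subsets of $[j]$. For any $S\subseteq[j]$, the number of $t^*$-element subfamilies of $\binom{[j]}{q}$ whose union lies inside $S$ is precisely $\binom{\binom{|S|}{q}}{t^*}$, since these are the $t^*$-element subsets of $\binom{S}{q}$. Grouping by $s=|S|$ and using $(-1)^{j-s}\equiv 1\pmod{2}$, inclusion--exclusion yields
\[
  d_j \;\equiv\; \sum_{s=0}^{j} \binom{j}{s}\binom{\binom{s}{q}}{t^*}\pmod{2}.
\]
This sum has only $j+1$ terms, which is polynomial in the input size since $j\le j_0$ is polynomially bounded.

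Finally, each term is a product of three binomial coefficients. By Lemma~\ref{lem:binom}, the parity of any binomial coefficient $\binom{a}{b}$ is determined in polynomial time from the bit expansions of $a$ and $b$, and all arguments appearing above ($j$, $s$, $q$, $t^*$, and $\binom{s}{q}$) are polynomially bounded. Hence every term, and therefore the whole sum, is computable modulo~$2$ in polynomial time. The only step that really requires care is the combinatorial identification of $d_j$ as this covering count; once that is in place, the inclusion--exclusion and the parity evaluation via Lemma~\ref{lem:binom} are routine, so no real obstacle is anticipated.
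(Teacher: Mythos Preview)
Your proposal is correct and follows essentially the same route as the paper: identify $d_j$ with the number of size-$t^*$ set covers of $\bigl(\binom{[j]}{q},[j]\bigr)$, apply inclusion--exclusion modulo~$2$ to obtain $d_j\equiv\sum_{s=0}^{j}\binom{j}{s}\binom{\binom{s}{q}}{t^*}$, and observe that this has polynomially many, polynomially computable terms. A tiny slip: each summand is a product of \emph{two} binomial coefficients (with $\binom{s}{q}$ appearing as an argument, not as a separate factor), but this does not affect the argument.
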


\begin{proof}
Again we use that fact that for a set $A_0 \in \cS_0$
we have that $|\Phi^{-1}(A_0)|$ is equal to the number set covers of size $t^*$ in the set system $(\binom{A_0}{q},A_0)$.
Using the inclusion-exclusion principle modulo two we obtain the following formula when $|A_0|=j$.
\[
	|\Phi^{-1}(A_0)| \equiv \sum_{X \subseteq A_0} \left|\left\{\mathcal{H} \subseteq \binom{X}{q} \Big{|} |\mathcal{H}|=t^*\right\}\right|= \sum_{i=0}^{j} \binom{j}{i} \binom{\binom{i}{q}}{t^*},
\]
Where the second equality follows by grouping all summands $X \subseteq A_0$ with $|X|=i$ for every $0 \leq i \leq |A_0|$.
\end{proof}

Consequently, by solving a polynomial of $n$ number of instances
of the \psetcoverarg{$k'$} problem with universe size bounded by 
$n+q+z$ and set family size bounded by $(cn+q)^q+1$, we verify whether 
the initial set system $\mathcal{F}' \subseteq 2^{U'}$ has an odd 
number of set covers, which finishes the proof of 
Theorem~\ref{thm:psetcovera}.
\qed

\section{Summary and Open Problems}
We have shown that the exponential time complexity of a number of
basic problems is strongly interconnected. Specifically, our results
imply that the optimal growth rates of a a number of problems are in
fact asymptotically equal.
Assuming \seth{}, our results imply tight lower bounds on the growth
rates for a number of search problems whose growth rates are achieved
by na\"ive brute force algorithms.
For problems solvable by dynamic programming, we gave tight lower
bounds assuming that the optimal growth rate of~\setcover is achieved
by its known dynamic programming algorithm.
Finally, we connected the two types of results by showing that
\seth{} implies tight lower bounds on the optimal growth rates of
corresponding parity variants.
We conclude our work with some open problems.
\begin{enumerate}\setlength\itemsep{-2mm}
\item Is it possible to rule out an algorithm for \setcover with running time $2^{\epsilon n}m^{O(1)}$, $\epsilon < 1$, assuming $\seth$?
\item Is it possible to rule out an algorithm for \coloring{} with running time $2^{\epsilon n}$, $\epsilon < 1$, assuming $\seth$? What about a lower bound for \coloring under the assumption that there does not exist a $\delta < 1$ such that \setcover with sets of size at most $k$ has a $O(2^{\delta n}m^{O(1)})$ time algorithm for every $k$?
\item Is it possible to rule out an algorithm that {\em counts} the number of proper $c$-colorings of an input graph in time $2^{\epsilon n}$, $\epsilon < 1$ assuming $\pseth$?
\item Assuming $\seth$, is it possible to rule out an algorithm with running time $2^{\epsilon n}n^{O(1)}$, $\epsilon < 1$ for the satisfiability of circuits with at most $cn$ gates of {\em unbounded fan in}, for a concrete constant c?
\item Assuming $\seth$, is it possible to rule out an algorithm with 
  running time $O(c^n)$ for \cnfsatargs{3} for a concrete constant 
  $c$?
\end{enumerate}

\newpage

\printbibliography
%

\end{document}